%% file: main.tex
\documentclass{article}

\usepackage[preprint]{template}
\usepackage{amsthm,amsmath,amssymb}
\usepackage{graphicx}
\usepackage{natbib}
\usepackage[hidelinks,breaklinks]{hyperref}
\usepackage{subfig}
\usepackage[utf8]{inputenc} 
\usepackage[T1]{fontenc}    
\usepackage{url}            
\usepackage{booktabs}       
\usepackage{tabularx}   
\usepackage{amsfonts}       
\usepackage{nicefrac}       
\usepackage{microtype}      
\usepackage[table]{xcolor}   
\usepackage{float}
\usepackage{algorithmic}
\usepackage{algorithm}

\usepackage{tikz}
\usetikzlibrary{arrows.meta,calc,positioning,shapes.geometric}
\definecolor{gapred}{rgb}{0.80, 0.25, 0.15}
\definecolor{cOurs}{HTML}{c4553a}
\definecolor{cRPNI}{HTML}{3d5a80}
\definecolor{cAlergia}{HTML}{2a8f82}
\definecolor{cGray}{HTML}{8a8478}
\definecolor{cLight}{HTML}{d5cfc5}
\definecolor{cOursBg}{HTML}{f5e3df}
\definecolor{cRPNIBg}{HTML}{dfe6ed}
\definecolor{cAlergiaBg}{HTML}{ddeeed}
\definecolor{cGroupHd}{HTML}{f0eeeb}
\definecolor{cBest}{HTML}{fef3e0}

\graphicspath{{figs/}{outputs/figures/}}

\newtheorem{theorem}{Theorem}

\newtheorem{definition}[theorem]{Definition}

\newtheorem{proposition}[theorem]{Proposition}
\newtheorem{remark1}[theorem]{Remark}
\newenvironment{remark}{\begin{remark1} \rm}{\end{remark1}}

\hypersetup{
    colorlinks=true,
    linkcolor=green!50!blue,
    citecolor=green!50!blue,
    urlcolor=green!50!blue
}

\title{Divergence-Guided Particle Swarm Optimization}

\author{%
  Kleyton da Costa\thanks{Corresponding author. This work was done during the author's period as M.Sc student in the Pontifical Catholic University of Rio de Janeiro} \\
  Pontifical Catholic University of Rio de Janeiro\\
  Rio de Janeiro, Brazil \\
  \texttt{kleyton.vsc@gmail.com} \\
  \And
  Bernardo Modenesi\\
  University of Utah\\
  Salt Lake City, US \\
  \texttt{bernardo.modenesi@utah.edu} \\
  \And
  Ivan F.M. Menezes\\
  Pontifical Catholic University of Rio de Janeiro\\
  Rio de Janeiro, Brazil \\
  \texttt{ivan@puc-rio.br} \\
  \And
  Hélio Lopes\\
  Pontifical Catholic University of Rio de Janeiro\\
  Rio de Janeiro, Brazil \\
  \texttt{lopes@inf.puc-rio.br} \\
}

\begin{document}

\maketitle

\begin{abstract}
   Particle Swarm Optimization (PSO) is susceptible to premature convergence when the swarm collapses around the global best, particularly on multimodal landscapes in higher dimensions. We propose Divergence-guided PSO (DPSO), which augments the velocity update with a modulation term that repels particles whose personal bests have converged near the global best. The repulsion is gated by a Gaussian similarity kernel, which we prove is equivalent to an exponentially decaying function of the KL divergence between Gaussian-embedded personal and global bests, connecting the mechanism to the family of $f$-divergences and providing a principled basis for kernel design. Experiments on 36 benchmark functions (15 unimodal, 21 multimodal) across dimensions $D \in \{10, 30, 50\}$, each with 30 independent runs, show that DPSO frequently outperforms standard PSO on multimodal problems, with improvements of 2--8$\times$ on functions such as Pinter, Ackley, and Levy, and up to 5$\times$ reduction in run-to-run variance. On unimodal landscapes the modulation term is counterproductive, confirming that DPSO targets the exploration--exploitation trade-off rather than offering a universal improvement. The method adds one hyperparameter, incurs 15--25\% wall-clock overhead, and does not increase the asymptotic per-iteration complexity of PSO. The project code is available here: \href{https://github.com/Kleyt0n/dpso}{https://github.com/Kleyt0n/dpso}
\end{abstract}

\section{Introduction}

Optimization methods are essential tools in various disciplines for identifying optimal solutions to complex problems. These methods vary widely in their complexity and computational demands, making them suitable for different optimization scenarios. A common way to classify these methods is by the type and amount of information they use to guide the search for optimal solutions. For instance, gradient-based methods typically employ first-order derivatives of the objective function and, in some cases, second-order derivatives to efficiently navigate the search space. In contrast, gradient-free methods, or zero-order methods, are designed for situations where gradient information is unavailable, unreliable, or impractical to compute. These methods rely solely on function evaluations at different points in the search space to infer the direction toward the optimal solution. Among gradient-free approaches, population-based methods, such as evolutionary algorithms \citep{fogel1994evolutionary} and particle swarm optimization \citep{wang2018particle}, are particularly popular. They leverage a diverse set of candidate solutions and stochastic processes to explore the search space effectively.

Particle Swarm Optimization (PSO) introduced by \cite{kennedy1995particle} is a metaheuristic optimization algorithm inspired by the collective behavior of swarms. It is widely applied in solving high-dimensional \citep{helwig2007dimension}, nonlinear \citep{hu2002solving}, and black-box optimization problems \citep{tang2013surrogate}. The swarm dynamics were a convincing example of emergent behavior, that is, a complex global behavior arising from the interaction of simple rules or individual decisions. As pointed out by \cite{van2001analysis}, the emergency is one of the key aspects of successful results in PSO algorithm, by a simple implementation resulting in complex and accurate search dynamics. 

The standard PSO algorithm does not depend on the gradient of the objective function $f(\mathbf{x})$, making it well-suited for optimizing non-differentiable, discontinuous, or noisy functions. The algorithm employs a global model that maintains a single "best solution," referred to as the global best particle $gbest$, which represents the best-known position among all particles in the swarm. This particle acts as an attractor, influencing the movement of all other particles toward it. However, if the $gbest$ is not updated regularly, the swarm may converge prematurely, potentially becoming trapped in suboptimal minima.

Over the last three decades, numerous studies have explored and modified the PSO algorithm. For example, \cite{yildiz2009novel} proposed an algorithm that combines the particle swarm optimization technique with the receptor editing property of the immune system to address design and manufacturing problems. \cite{gang2012novel} improved the exploration-exploitation trade-off by randomly partitioning the population into sub-swarms. At periodic stages during the evolution, particles migrate between complexes to enhance population diversity. \cite{hakli2014novel} achieved promising results by integrating PSO with a specific type of random walk known as Levy flight. \cite{eschwege2024belief} designed a self-adaptive particle swarm optimization algorithm using a belief space inspired by cultural algorithms.

In this paper, we propose a novel PSO approach based on the divergence measure between the best particle position and the global best position—called Divergence-guided Particle Swarm Optimization (DPSO). The key idea behind this method is to preserve exploration throughout the iterations to identify the optimal solution to the problem. Our main findings indicate that DPSO outperforms standard PSO in several aspects: (i) DPSO demonstrates similar computing time but better accuracy compared with PSO; (ii) DPSO maintains an exploration-focused search behaviour that enhances the global solution finding. Additionally, we present an efficient parallelized and open-source algorithm\footnote{The code and experiments are available in \href{https://github.com/Kleyt0n/dpso}{this repository}.} to improve computational efficiency in resource-constrained environments. 

The document is organized as follows: Section 2 presents our novel Divergence-guided Particle Swarm Optimization metaheuristic; Section 3 shows our experiments and main results; and Section 4 describes the main findings of our work and the directions for future research.


\section{Divergence-Guided Particle Swarm Optimization}
\label{sec:dpso}

We introduce Divergence-guided Particle Swarm Optimization (DPSO), a principled extension of the standard PSO framework that mitigates premature convergence through a divergence-based velocity modulation term. The key idea is to introduce
a repulsive force that activates when a particle's search history becomes indistinguishable from the global attractor, thereby preserving exploration throughout the optimization process.

\subsection{Preliminaries}
\label{sec:preliminaries}

Consider the unconstrained minimization problem
$\min_{\mathbf{x} \in \mathbb{R}^n} f(\mathbf{x})$, where $f \colon
\mathbb{R}^n \to \mathbb{R}$ is a (possibly non-differentiable or black-box)
objective function. The standard PSO algorithm~\citep{kennedy1995particle}
maintains a swarm of $N$ particles, each characterized by a position
$\mathbf{x}_i^{(t)} \in \mathbb{R}^n$ and velocity $\mathbf{v}_i^{(t)} \in
\mathbb{R}^n$ at iteration $t$. Each particle tracks its personal best position
$\mathbf{p}_i^{(t)} \triangleq \arg\min_{\tau \le t} f(\mathbf{x}_i^{(\tau)})$,
and the swarm maintains a global best
$\mathbf{g}^{(t)} \triangleq \arg\min_{i,\,\tau \le t}
f(\mathbf{x}_i^{(\tau)})$. The dynamics follow~\citep{shi1998modified}:
\begin{align}
    \mathbf{v}_i^{(t+1)} &= \omega\, \mathbf{v}_i^{(t)}
        + c_1 r_1^{(t)} \bigl(\mathbf{p}_i^{(t)} - \mathbf{x}_i^{(t)}\bigr)
        + c_2 r_2^{(t)} \bigl(\mathbf{g}^{(t)} - \mathbf{x}_i^{(t)}\bigr),
    \label{eq:pso_vel} \\
    \mathbf{x}_i^{(t+1)} &= \mathbf{x}_i^{(t)} + \mathbf{v}_i^{(t+1)},
    \label{eq:pso_pos}
\end{align}
where $\omega > 0$ is the inertia weight, $c_1, c_2 > 0$ are the cognitive and
social acceleration coefficients, and $r_1^{(t)}, r_2^{(t)} \sim
\mathrm{Uniform}(0,1)$ are independent random scalars drawn at each iteration.

\begin{remark}[Premature convergence]
\label{rem:premature}
As the swarm evolves, personal bests $\mathbf{p}_i^{(t)}$ tend to cluster
around the global best $\mathbf{g}^{(t)}$. When
$\mathbf{p}_i^{(t)} \approx \mathbf{g}^{(t)}$ for most particles, the
cognitive and social terms in Eq. ~\eqref{eq:pso_vel} both pull toward
$\mathbf{g}^{(t)}$, collapsing the effective search radius. If
$\mathbf{g}^{(t)}$ is a local (but not global) minimizer, the swarm becomes
trapped. This failure mode is well-documented in the
literature~\citep{van2001analysis} and motivates our approach.
\end{remark}

\subsection{The DPSO velocity update}
\label{sec:dpso_update}

DPSO augments (Eq. \eqref{eq:pso_vel}) with an additive modulation term that applies a
repulsive force pushing particles away from $\mathbf{g}^{(t)}$, with magnitude
governed by the proximity of $\mathbf{p}_i^{(t)}$ to $\mathbf{g}^{(t)}$:
\begin{equation}
\label{eq:dpso_vel}
\boxed{
    \mathbf{v}_i^{(t+1)} = \underbrace{\omega\, \mathbf{v}_i^{(t)}}_{\text{inertia}}
        + \underbrace{c_1 r_1^{(t)} \bigl(\mathbf{p}_i^{(t)} - \mathbf{x}_i^{(t)}\bigr)}_{\text{cognitive}}
        + \underbrace{c_2 r_2^{(t)} \bigl(\mathbf{g}^{(t)} - \mathbf{x}_i^{(t)}\bigr)}_{\text{social}}
        + \underbrace{\mathbf{v}_{\mathrm{mod},i}^{(t)}}_{\text{modulation}}
}
\end{equation}
with position update $\mathbf{x}_i^{(t+1)} = \mathbf{x}_i^{(t)} +
\mathbf{v}_i^{(t+1)}$ as before. The modulation term is defined as follows.

\begin{definition}[Divergence modulation]
\label{def:vmod}
Let $c_3 \ge 0$ be a modulation strength coefficient,
$r_3^{(t)} \sim \mathrm{Uniform}(0,1)$, and $\sigma > 0$ a bandwidth
parameter. The modulation velocity for particle $i$ at iteration $t$ is:
\begin{equation}
\label{eq:vmod}
    \mathbf{v}_{\mathrm{mod},i}^{(t)}
        = c_3\, r_3^{(t)}\;
          \kappa\!\bigl(\mathbf{p}_i^{(t)},\, \mathbf{g}^{(t)}\bigr)
          \;\hat{\mathbf{d}}_i^{(t)},
\end{equation}
where $\kappa \colon \mathbb{R}^n \times \mathbb{R}^n \to [0,1]$ is a
\emph{similarity kernel},
\begin{equation}
\label{eq:kernel}
    \kappa\!\bigl(\mathbf{p}_i^{(t)},\, \mathbf{g}^{(t)}\bigr)
        = \exp\!\left(
            -\frac{\bigl\|\mathbf{p}_i^{(t)} - \mathbf{g}^{(t)}\bigr\|_2^2}
                  {2\sigma^2}
          \right),
\end{equation}
and $\hat{\mathbf{d}}_i^{(t)} \in \mathbb{R}^n$ is the \emph{repulsion
direction},
\begin{equation}
\label{eq:repulsion_dir}
    \hat{\mathbf{d}}_i^{(t)}
        = \frac{\mathbf{x}_i^{(t)} - \mathbf{g}^{(t)}}
               {\bigl\|\mathbf{x}_i^{(t)} - \mathbf{g}^{(t)}\bigr\|_2
                + \epsilon},
\end{equation}
with $\epsilon > 0$ a small constant (e.g., $\epsilon = 10^{-9}$) for numerical
stability.
\end{definition}

The design of Eq. ~\eqref{eq:vmod} rests on two interacting components whose roles we
now clarify.

\paragraph{Similarity kernel (activation gate).}
The Gaussian kernel $\kappa$ acts as a smooth gate that controls \emph{when}
repulsion activates. It satisfies $\kappa \to 1$ as
$\mathbf{p}_i^{(t)} \to \mathbf{g}^{(t)}$ and decays to zero as the distance
$\|\mathbf{p}_i^{(t)} - \mathbf{g}^{(t)}\|_2$ grows relative to $\sigma$.
Consequently, the modulation term has appreciable magnitude only for particles
whose personal best has converged near the current global best—precisely the
particles at risk of premature convergence (cf.\ Remark~\ref{rem:premature}).
The bandwidth $\sigma$ controls the activation radius: smaller $\sigma$ restricts
repulsion to particles whose personal bests are very close to $\mathbf{g}^{(t)}$,
while larger $\sigma$ applies gentler repulsion over a wider neighborhood.

\paragraph{Repulsion direction.}
The unit vector $\hat{\mathbf{d}}_i^{(t)}$ points from $\mathbf{g}^{(t)}$
toward the particle's \emph{current} position $\mathbf{x}_i^{(t)}$, not its
personal best. This choice is deliberate: it pushes the particle along its
existing displacement from the global attractor, encouraging exploration of
regions the particle already occupies rather than creating arbitrary displacements.
Combined with the stochastic factor $r_3^{(t)}$, this produces diversified
trajectories without requiring an explicit randomization of direction.

\begin{remark}[Interpolation between PSO and DPSO]
\label{rem:interpolation}
Setting $c_3 = 0$ recovers standard PSO exactly. For $c_3 > 0$, the modulation
strength is bounded: $\|\mathbf{v}_{\mathrm{mod},i}^{(t)}\|_2 \le c_3$ almost
surely, since $\kappa \le 1$, $r_3^{(t)} \le 1$, and
$\|\hat{\mathbf{d}}_i^{(t)}\|_2 \le 1$. This ensures the modulation term does
not dominate the velocity update regardless of the choice of $c_3$.
\end{remark}

\subsection{Connection to $f$-divergences}
\label{sec:fdiv}

The Gaussian kernel in Eq. \eqref{eq:kernel} admits a natural interpretation through
the lens of $f$-divergences, which provides a principled framework for
alternative modulation designs.

Recall that for probability distributions $P$ and $Q$ with densities $p$ and
$q$ on $\mathcal{X} \subseteq \mathbb{R}^n$, the $f$-divergence is
\begin{equation}
\label{eq:fdiv_def}
    D_f(P \,\|\, Q) = \int_{\mathcal{X}} q(\mathbf{x})\,
        f\!\left(\frac{p(\mathbf{x})}{q(\mathbf{x})}\right) \mathrm{d}\mathbf{x},
\end{equation}
where $f \colon (0,\infty) \to \mathbb{R}$ is convex with $f(1) = 0$.

To relate point estimates to distributions, we embed $\mathbf{p}_i^{(t)}$ and
$\mathbf{g}^{(t)}$ as the means of isotropic Gaussians with shared bandwidth
$\sigma_k > 0$:
\begin{equation}
\label{eq:gaussian_embed}
    P_i \sim \mathcal{N}\!\bigl(\mathbf{p}_i^{(t)},\, \sigma_k^2 \mathbf{I}_n\bigr),
    \qquad
    Q_g \sim \mathcal{N}\!\bigl(\mathbf{g}^{(t)},\, \sigma_k^2 \mathbf{I}_n\bigr).
\end{equation}
For the Kullback--Leibler divergence ($f(u) = u \log u$), the divergence between
two isotropic Gaussians with equal covariance simplifies to~\citep{duchi2007derivations}:
\begin{equation}
\label{eq:kl_closed}
    D_{\mathrm{KL}}(P_i \,\|\, Q_g)
        = \frac{\bigl\|\mathbf{p}_i^{(t)} - \mathbf{g}^{(t)}\bigr\|_2^2}
               {2\sigma_k^2}.
\end{equation}

\begin{proposition}[Equivalence to KL-based modulation]
\label{prop:kl_equiv}
The modulation term (Equation \eqref{eq:vmod}) with kernel (Equation \eqref{eq:kernel}) is equivalent
to
\begin{equation}
\label{eq:vmod_kl}
    \mathbf{v}_{\mathrm{mod},i}^{(t)}
        = c_3\, r_3^{(t)}\;
          \exp\!\bigl(-\alpha\, D_{\mathrm{KL}}(P_i \,\|\, Q_g)\bigr)
          \;\hat{\mathbf{d}}_i^{(t)},
\end{equation}
with $\alpha = \sigma_k^2 / \sigma^2$.
\end{proposition}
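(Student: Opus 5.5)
The plan is to show that the two scalar gates agree pointwise, so that the rest of the modulation term can be left untouched. Equations \eqref{eq:vmod} and \eqref{eq:vmod_kl} share the factor $c_3\, r_3^{(t)}$ and the repulsion direction $\hat{\mathbf{d}}_i^{(t)}$ of Eq.~\eqref{eq:repulsion_dir}. Neither of these depends on the kernel, so it suffices to prove
\begin{equation*}
\kappa\bigl(\mathbf{p}_i^{(t)},\mathbf{g}^{(t)}\bigr) = \exp\bigl(-\alpha\, D_{\mathrm{KL}}(P_i\,\|\,Q_g)\bigr)
\end{equation*}
for every pair $\mathbf{p}_i^{(t)},\mathbf{g}^{(t)}\in\mathbb{R}^n$.

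First, I will justify the closed form \eqref{eq:kl_closed} rather than only cite it, since it carries the whole argument. Writing the log-density ratio of the two isotropic Gaussians with common covariance $\sigma_k^2\mathbf{I}_n$, the normalizing constants cancel. This leaves
\begin{equation*}
\log\frac{p(\mathbf{x})}{q(\mathbf{x})} = \frac{\|\mathbf{x}-\mathbf{g}^{(t)}\|_2^2 - \|\mathbf{x}-\mathbf{p}_i^{(t)}\|_2^2}{2\sigma_k^2}.
\end{equation*}
Taking the expectation under $P_i$, I substitute $\mathbb{E}\|\mathbf{x}-\mathbf{p}_i^{(t)}\|_2^2 = n\sigma_k^2$ and $\mathbb{E}\|\mathbf{x}-\mathbf{g}^{(t)}\|_2^2 = n\sigma_k^2 + \|\mathbf{p}_i^{(t)}-\mathbf{g}^{(t)}\|_2^2$. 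The trace terms then cancel and leave exactly $\|\mathbf{p}_i^{(t)}-\mathbf{g}^{(t)}\|_2^2/(2\sigma_k^2)$. This agrees with the $f$-divergence form \eqref{eq:fdiv_def} for $f(u)=u\log u$.

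Second, I substitute $\alpha=\sigma_k^2/\sigma^2$:
\begin{equation*}
\alpha\, D_{\mathrm{KL}}(P_i\,\|\,Q_g) = \frac{\sigma_k^2}{\sigma^2}\cdot\frac{\|\mathbf{p}_i^{(t)}-\mathbf{g}^{(t)}\|_2^2}{2\sigma_k^2} = \frac{\|\mathbf{p}_i^{(t)}-\mathbf{g}^{(t)}\|_2^2}{2\sigma^2}.
\end{equation*}
Exponentiating its negative gives precisely the kernel \eqref{eq:kernel}. Multiplying back by $c_3 r_3^{(t)}\hat{\mathbf{d}}_i^{(t)}$ then gives identical vectors, which proves the claimed equivalence.

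There is no real obstacle. The only point needing care is the interpretation of "equivalent". The identity holds for every choice of embedding bandwidth $\sigma_k>0$, because $\sigma_k$ cancels, so the statement is really a reparametrization in which $\alpha$ absorbs the ratio of bandwidths. I will note this explicitly, and also note the special case $\sigma_k=\sigma$, $\alpha=1$, where $\kappa$ is simply $\exp(-D_{\mathrm{KL}})$.
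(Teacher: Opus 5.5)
Your proof is correct and matches the paper's argument: substitute the closed-form KL divergence into $\exp(-\alpha D_{\mathrm{KL}}(P_i\,\|\,Q_g))$ with $\alpha=\sigma_k^2/\sigma^2$, let $\sigma_k$ cancel to recover $\kappa$, and leave the common factor $c_3 r_3^{(t)}\hat{\mathbf{d}}_i^{(t)}$ untouched. The only difference is that you derive the equal-covariance Gaussian KL formula directly (correctly) instead of citing it.
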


\begin{proof}
Substituting Eq. \eqref{eq:kl_closed} into Eq. \eqref{eq:vmod_kl}:

\begin{equation}
\begin{array}{rl}    
    \exp\!\bigl(-\alpha\, D_{\mathrm{KL}}(P_i \,\|\, Q_g)\bigr)
    = \exp\!\left(
        -\frac{\sigma_k^2}{\sigma^2}
        \cdot \frac{\bigl\|\mathbf{p}_i^{(t)} - \mathbf{g}^{(t)}\bigr\|_2^2}
                   {2\sigma_k^2}
      \right)
    &\\ = \exp\!\left(
        -\frac{\bigl\|\mathbf{p}_i^{(t)} - \mathbf{g}^{(t)}\bigr\|_2^2}
              {2\sigma^2}
      \right)
    &\\ =  \kappa\!\bigl(\mathbf{p}_i^{(t)}, \mathbf{g}^{(t)}\bigr). \qedhere
\end{array}
\end{equation}
\end{proof}

This equivalence reveals that the Gaussian kernel is a monotonically decreasing
function of the KL divergence between the Gaussian-embedded personal and global
bests: repulsion is strongest when the two distributions overlap most
(low divergence) and vanishes as they separate (high divergence). We refer to
this design as \emph{repulsion proportional to similarity} (RPS).

\begin{remark}[Generalization via alternative $f$-divergences]
\label{rem:generalization}
Proposition~\ref{prop:kl_equiv} extends to any $f$-divergence that admits a
closed-form expression under the Gaussian embedding (Eq. \eqref{eq:gaussian_embed}).
Replacing $D_{\mathrm{KL}}$ in Eq. \eqref{eq:vmod_kl} with, for example, the squared
Hellinger distance ($f(u) = (\sqrt{u} - 1)^2$), the Jensen--Shannon divergence,
or the total variation distance ($f(u) = \tfrac{1}{2}|u-1|$) yields alternative
kernels with different sensitivity profiles. For instance, the Hellinger distance
between equal-covariance Gaussians is
$H^2(P_i, Q_g) = 1 - \exp\!\bigl(-\|\mathbf{p}_i^{(t)} -
\mathbf{g}^{(t)}\|_2^2 / (8\sigma_k^2)\bigr)$, which saturates more rapidly
than the KL divergence and may be preferable when a sharper activation boundary
is desired. The choice of divergence thus provides a design knob for shaping the
exploration--exploitation trade-off.
\end{remark}

\subsection{Algorithm}
\label{sec:algorithm}

Algorithm~\ref{alg:dpso} presents the complete DPSO procedure. The modulation
term integrates seamlessly into the standard PSO loop: lines~\ref{line:dir}--\ref{line:vmod} are the only additions, and each particle's
update remains independent, preserving the suitability for parallelization.

\begin{algorithm}[H]
\caption{Divergence-Guided Particle Swarm Optimization (DPSO)}
\label{alg:dpso}
\begin{algorithmic}[1]
\REQUIRE Objective $f$, swarm size $N$, dimension $n$, coefficients
    $\omega, c_1, c_2, c_3$, bandwidth $\sigma$, max.\ iterations $T$,
    search bounds $[\mathbf{lb}, \mathbf{ub}] \subset \mathbb{R}^n$.
\STATE Initialize $\mathbf{x}_i^{(0)} \sim \mathrm{Uniform}(\mathbf{lb},
    \mathbf{ub})$, $\mathbf{v}_i^{(0)} = \mathbf{0}$, $\mathbf{p}_i^{(0)}
    \gets \mathbf{x}_i^{(0)}$ for $i = 1, \dots, N$.
\STATE $\mathbf{g}^{(0)} \gets \arg\min_{i} f(\mathbf{p}_i^{(0)})$.
\FOR{$t = 0, 1, \dots, T-1$}
    \FOR{$i = 1, \dots, N$ \textbf{in parallel}}
        \STATE Draw $r_1, r_2, r_3 \sim \mathrm{Uniform}(0, 1)$ independently.
        \STATE $\hat{\mathbf{d}}_i^{(t)} \gets
            \bigl(\mathbf{x}_i^{(t)} - \mathbf{g}^{(t)}\bigr) \big/
            \bigl(\|\mathbf{x}_i^{(t)} - \mathbf{g}^{(t)}\|_2 + \epsilon\bigr)$
            \label{line:dir}
            \hfill $\triangleright$ Repulsion direction
        \STATE $\kappa_i \gets \exp\!\bigl(-\|\mathbf{p}_i^{(t)} -
            \mathbf{g}^{(t)}\|_2^2 \big/ (2\sigma^2)\bigr)$
            \hfill $\triangleright$ Similarity kernel
        \STATE $\mathbf{v}_{\mathrm{mod},i}^{(t)} \gets c_3\, r_3\,
            \kappa_i\, \hat{\mathbf{d}}_i^{(t)}$
            \label{line:vmod}
            \hfill $\triangleright$ Divergence modulation
        \STATE $\mathbf{v}_i^{(t+1)} \gets \omega\, \mathbf{v}_i^{(t)}
            + c_1 r_1 (\mathbf{p}_i^{(t)} - \mathbf{x}_i^{(t)})
            + c_2 r_2 (\mathbf{g}^{(t)} - \mathbf{x}_i^{(t)})
            + \mathbf{v}_{\mathrm{mod},i}^{(t)}$
        \STATE $\mathbf{v}_i^{(t+1)} \gets
            \mathrm{clamp}\bigl(\mathbf{v}_i^{(t+1)},\, -\mathbf{v}_{\max},\,
            \mathbf{v}_{\max}\bigr)$
            \hfill $\triangleright$ Velocity clamping
        \STATE $\mathbf{x}_i^{(t+1)} \gets
            \mathrm{clamp}\bigl(\mathbf{x}_i^{(t)} + \mathbf{v}_i^{(t+1)},\,
            \mathbf{lb},\, \mathbf{ub}\bigr)$
            \hfill $\triangleright$ Boundary enforcement
        \STATE Evaluate $f_i \gets f(\mathbf{x}_i^{(t+1)})$.
        \IF{$f_i < f(\mathbf{p}_i^{(t)})$}
            \STATE $\mathbf{p}_i^{(t+1)} \gets \mathbf{x}_i^{(t+1)}$
        \ELSE
            \STATE $\mathbf{p}_i^{(t+1)} \gets \mathbf{p}_i^{(t)}$
        \ENDIF
    \ENDFOR
    \STATE $\mathbf{g}^{(t+1)} \gets \arg\min\bigl\{f(\mathbf{g}^{(t)}),\,
        \min_i f(\mathbf{p}_i^{(t+1)})\bigr\}$
        \hfill $\triangleright$ Synchronous global best update
\ENDFOR
\RETURN $\mathbf{g}^{(T)},\; f(\mathbf{g}^{(T)})$.
\end{algorithmic}
\end{algorithm}

\subsection{Computational complexity}
\label{sec:complexity}

Let $C_f$ denote the cost of a single evaluation of $f$.

\begin{proposition}[Complexity]
\label{prop:complexity}
The per-iteration cost of DPSO is $\Theta\bigl(N(n + C_f)\bigr)$, identical in
order to standard PSO. Over $T$ iterations, the total cost is
$\Theta\bigl(T \cdot N \cdot (n + C_f)\bigr)$.
\end{proposition}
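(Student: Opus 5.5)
We prove Proposition~\ref{prop:complexity} by going through Algorithm~\ref{alg:dpso} line by line and charging each operation to one particle in one iteration. We use the unit-cost RAM model, in which a scalar arithmetic operation, a draw of a uniform random number, and one call to $\exp$ each cost $O(1)$. An evaluation of $f$ costs $C_f$. We prove an upper bound and a lower bound separately, so that the result is a genuine $\Theta$.

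\emph{Upper bound.} Fix a particle $i$ and an iteration $t$, and bound the inner loop body one step at a time:
\begin{itemize}
\item Drawing $r_1,r_2,r_3$ costs $O(1)$.
\item The repulsion direction $\hat{\mathbf{d}}_i^{(t)}$ (line~\ref{line:dir}) needs one vector subtraction, one Euclidean norm and one scalar-vector division. Each of these is $O(n)$.
\item The kernel $\kappa_i$ needs one vector difference and one squared norm, which cost $O(n)$, then one exponential, which costs $O(1)$.
\item The modulation term $\mathbf{v}_{\mathrm{mod},i}^{(t)}$ (line~\ref{line:vmod}) is a scalar-vector product, so it costs $O(n)$.
\item The velocity update, the velocity clamping, the position update and the boundary clamping are each a constant number of coordinatewise vector operations. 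Each costs $O(n)$.
\item One call to $f$ costs $C_f$.
\item The personal-best comparison and the possible copy of $\mathbf{x}_i^{(t+1)}$ cost $O(n)$. This assumes $f(\mathbf{p}_i^{(t)})$ is cached, as any implementation would do, so it is not re-evaluated.
\end{itemize}
The body therefore costs $O(n + C_f)$. Over $N$ particles this gives $O(N(n+C_f))$. The synchronous global-best update takes a minimum over $N$ cached values, costing $O(N)$, plus one copy of a vector, costing $O(n)$. Both are absorbed into $O(N(n+C_f))$.

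\emph{Lower bound.} Every iteration evaluates $f$ exactly $N$ times, which costs $N C_f$. Every iteration also writes $\mathbf{x}_i^{(t+1)}$ for each of the $N$ particles, and each write touches $n$ coordinates, which costs at least $Nn$. Hence the per-iteration cost is $\Omega(N(n+C_f))$, and together with the upper bound it is $\Theta(N(n+C_f))$.

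\emph{Comparison with PSO.} Standard PSO runs the same loop without lines~\ref{line:dir}--\ref{line:vmod}. Those lines contribute only $O(n)$ per particle, so the same two-sided argument gives $\Theta(N(n+C_f))$ for PSO, and the orders coincide.

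\emph{Total cost.} The initialization costs $O(N(n+C_f))$ and the main loop runs exactly $T$ iterations. Summing gives a total of $\Theta(T N (n+C_f))$ for $T\ge 1$.

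The only point needing care is the bookkeeping assumption that function values of the personal bests and the global best are stored rather than recomputed. If they were recomputed, the cost would grow by at most a constant factor in $C_f$: one extra evaluation for the personal-best comparison and $N+1$ for the global-best update, each costing $C_f$. That is at most about three times the baseline $N C_f$, so the $\Theta$ bound is unchanged. We state this assumption explicitly; no other step poses a real obstacle.
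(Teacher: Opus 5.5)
Your proposal is correct and follows the same operation-counting argument as the paper's proof sketch. The paper only notes that the modulation adds three $O(n)$ operations per particle on top of PSO's $\Theta(n)$ per-particle work, and takes PSO's $\Theta\bigl(N(n+C_f)\bigr)$ cost as known. Your version is more complete: it proves the $\Omega$ direction explicitly from the $N$ mandatory evaluations of $f$ and the $Nn$ coordinate writes, and it states the caching of $f(\mathbf{p}_i^{(t)})$ and $f(\mathbf{g}^{(t)})$ outright (showing the bound survives without caching). The underlying route is nonetheless the same.
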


\begin{proof}[Proof sketch]
The modulation term adds three $O(n)$ operations per particle per iteration:
(i)~the norm $\|\mathbf{p}_i^{(t)} - \mathbf{g}^{(t)}\|_2$ for the kernel,
(ii)~the normalized direction $\hat{\mathbf{d}}_i^{(t)}$, and
(iii)~the scalar--vector product $c_3 r_3 \kappa_i \hat{\mathbf{d}}_i^{(t)}$.
Since standard PSO already performs $\Theta(n)$ work per particle (three vector
additions/subtractions), the modulation increases the constant in the $O(Nn)$
term but does not change the asymptotic complexity.
\end{proof}

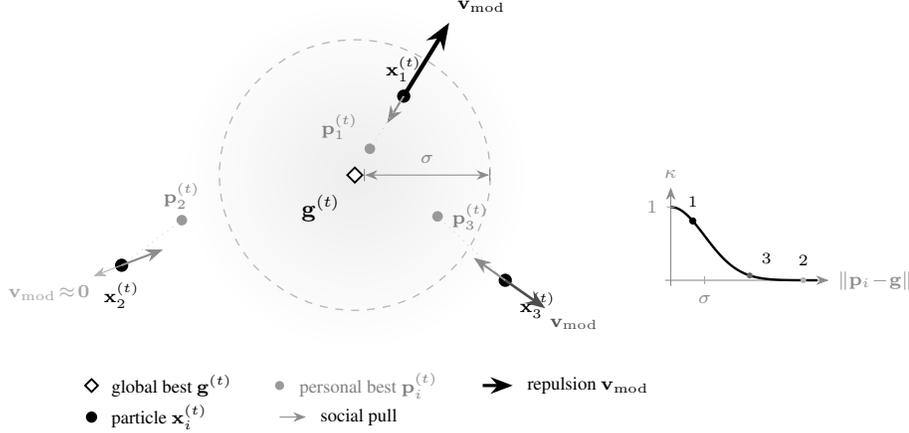
\begin{figure}[H]
\begin{tikzpicture}[
    >=Stealth,
    every node/.style={font=\small},
    particle/.style={circle, fill=black, inner sep=0pt, minimum size=5pt},
    gbest/.style={diamond, draw=black, fill=white, inner sep=0pt, minimum size=6pt, line width=0.7pt},
    pbest/.style={circle, fill=black!40, inner sep=0pt, minimum size=4pt},
    arrsocial/.style={->, thick, color=black!45},
    arrrepstrong/.style={->, line width=1.6pt, color=black},
    arrrepmod/.style={->, line width=1.0pt, color=black!70},
    arrrepweak/.style={->, thin, color=black!30},
    lbl/.style={font=\scriptsize},
]

\shade[inner color=black!7, outer color=white] (0,0) circle (2.3);
\draw[black!28, dashed, line width=0.5pt] (0,0) circle (1.8);

\draw[black!45, thin, |<->|] (0.12,0) -- (1.8,0);
\node[lbl, text=black!50, anchor=south] at (0.96,0.04) {$\sigma$};

\node[gbest] (g) at (0, 0) {};
\node[font=\small, anchor=north east, xshift=-2pt, yshift=-4pt] at (g) {$\mathbf{g}^{(t)}$};

\node[particle] (x1) at (0.65, 1.05) {};
\node[lbl, anchor=south, yshift=2pt] at (x1) {$\mathbf{x}_1^{(t)}$};

\node[pbest] (p1) at (0.2, 0.35) {};
\node[lbl, anchor=south east, xshift=-1pt, text=black!50] at (p1) {$\mathbf{p}_1^{(t)}$};

\draw[arrsocial] (x1.center) -- +(-0.22,-0.36);

\pgfmathsetmacro{\nAx}{0.65}
\pgfmathsetmacro{\nAy}{1.05}
\pgfmathsetmacro{\nA}{sqrt(\nAx*\nAx+\nAy*\nAy)}
\draw[arrrepstrong] (x1.center) -- +({1.15*\nAx/\nA},{1.15*\nAy/\nA})
    node[pos=1, lbl, anchor=south west, xshift=-1pt] {$\mathbf{v}_{\mathrm{mod}}$};

\node[particle] (x2) at (-3.1, -1.2) {};
\node[lbl, anchor=north, yshift=-3pt] at (x2) {$\mathbf{x}_2^{(t)}$};

\node[pbest] (p2) at (-2.3, -0.6) {};
\node[lbl, anchor=south, yshift=1pt, text=black!50] at (p2) {$\mathbf{p}_2^{(t)}$};

\draw[arrsocial] (x2.center) -- +(0.55, 0.21);

\pgfmathsetmacro{\nBx}{-3.1}
\pgfmathsetmacro{\nBy}{-1.2}
\pgfmathsetmacro{\nB}{sqrt(\nBx*\nBx+\nBy*\nBy)}
\draw[arrrepweak] (x2.center) -- +({0.4*\nBx/\nB},{0.4*\nBy/\nB})
    node[pos=1, lbl, anchor=north east, text=black!35, xshift=2pt] {$\mathbf{v}_{\mathrm{mod}}\!\approx\!\mathbf{0}$};

\node[particle] (x3) at (2.0, -1.4) {};
\node[lbl, anchor=north west, xshift=2pt, yshift=-1pt] at (x3) {$\mathbf{x}_3^{(t)}$};

\node[pbest] (p3) at (1.1, -0.55) {};
\node[lbl, anchor=west, xshift=2pt, text=black!50] at (p3) {$\mathbf{p}_3^{(t)}$};

\draw[arrsocial] (x3.center) -- +(-0.42, 0.30);

\pgfmathsetmacro{\nCx}{2.0}
\pgfmathsetmacro{\nCy}{-1.4}
\pgfmathsetmacro{\nC}{sqrt(\nCx*\nCx+\nCy*\nCy)}
\draw[arrrepmod] (x3.center) -- +({0.65*\nCx/\nC},{0.65*\nCy/\nC})
    node[pos=1, lbl, anchor=north west, xshift=-2pt] {$\mathbf{v}_{\mathrm{mod}}$};

\draw[black!18, dotted, thin] (x1.center) -- (p1.center);
\draw[black!18, dotted, thin] (x2.center) -- (p2.center);
\draw[black!18, dotted, thin] (x3.center) -- (p3.center);

\begin{scope}[shift={(4.2,-1.4)}, scale=0.75]
    \draw[->, black!45, thin] (-0.1,0) -- (2.8,0)
        node[right, font=\scriptsize, text=black!45] {$\|\mathbf{p}_i\!-\!\mathbf{g}\|$};
    \draw[->, black!45, thin] (0,-0.1) -- (0,1.65)
        node[above, font=\scriptsize, text=black!45] {$\kappa$};

    \draw[black, line width=0.9pt, domain=0:2.6, samples=80]
        plot (\x, {1.3*exp(-(\x*\x)/(2*0.6*0.6))});

    \draw[black!35, thin] (-0.07, 1.3) -- (0.07, 1.3);
    \node[font=\scriptsize, text=black!45, anchor=east] at (-0.07, 1.3) {$1$};

    \draw[black!35, thin] (0.6, -0.07) -- (0.6, 0.07);
    \node[font=\scriptsize, text=black!45, anchor=north] at (0.6, -0.1) {$\sigma$};

    \fill[black] ({0.39}, {1.3*exp(-(0.39*0.39)/(2*0.6*0.6))}) circle (1.8pt);
    \fill[black!60] ({1.4}, {1.3*exp(-(1.4*1.4)/(2*0.6*0.6))}) circle (1.5pt);
    \fill[black!30] ({2.35}, {1.3*exp(-(2.35*2.35)/(2*0.6*0.6))}) circle (1.5pt);

    \node[font=\tiny, anchor=south, yshift=2pt] at ({0.39}, {1.3*exp(-(0.39*0.39)/(2*0.6*0.6))}) {$1$};
    \node[font=\tiny, anchor=south west, xshift=1pt, yshift=1pt] at ({1.4}, {1.3*exp(-(1.4*1.4)/(2*0.6*0.6))}) {$3$};
    \node[font=\tiny, anchor=south, yshift=2pt] at ({2.35}, {1.3*exp(-(2.35*2.35)/(2*0.6*0.6))}) {$2$};
\end{scope}

\begin{scope}[shift={(-3.5, -2.8)}]
    \node[gbest, scale=0.85] at (0,0) {};
    \node[lbl, anchor=west, xshift=4pt] at (0,0) {global best $\mathbf{g}^{(t)}$};

    \node[particle, scale=0.85] at (0,-0.42) {};
    \node[lbl, anchor=west, xshift=4pt] at (0,-0.42) {particle $\mathbf{x}_i^{(t)}$};

    \node[pbest, scale=0.85] at (2.5,0) {};
    \node[lbl, anchor=west, xshift=4pt, text=black!50] at (2.5,0) {personal best $\mathbf{p}_i^{(t)}$};

    \draw[arrsocial, thin] (2.5,-0.42) -- +(0.35,0);
    \node[lbl, anchor=west, xshift=2pt] at (2.85,-0.42) {social pull};

    \draw[arrrepstrong, line width=1.2pt] (5.2, 0) -- +(0.4,0);
    \node[lbl, anchor=west, xshift=2pt] at (5.6, 0) {repulsion $\mathbf{v}_{\mathrm{mod}}$};
\end{scope}
\end{tikzpicture}
\caption{Illustration of the DPSO divergence modulation mechanism in a two-dimensional search space.}
\label{fig:diagram}
\end{figure}

In Figure \ref{fig:diagram}, the shaded region around the global best $\mathbf{g}^{(t)}$ (dashed circle, radius $\sigma$) indicates the activation zone of the similarity kernel $\kappa = \exp\!\bigl({-\|\mathbf{p}_i^{(t)}-\mathbf{g}^{(t)}\|^2}/{2\sigma^2}\bigr)$. Three particles illustrate the distance-dependent behaviour: particle~1, whose personal best $\mathbf{p}_1^{(t)}$ lies close to $\mathbf{g}^{(t)}$ ($\kappa \approx 1$), receives a strong repulsive velocity $\mathbf{v}_{\mathrm{mod}}$ directed away from $\mathbf{g}^{(t)}$; particle~3, with $\mathbf{p}_3^{(t)}$ near the activation boundary, receives moderate repulsion; particle~2, whose $\mathbf{p}_2^{(t)}$ is well outside the activation zone ($\kappa \approx 0$), is effectively governed by standard PSO dynamics alone. Gray arrows denote the social attraction component toward $\mathbf{g}^{(t)}$. \textbf{(Inset)}~The kernel profile $\kappa$ as a function of $\|\mathbf{p}_i - \mathbf{g}\|$, with the positions of particles 1-3 marked, showing the exponential decay that governs the modulation strength.

\begin{remark}[Practical overhead]
When $C_f \gg n$ (e.g., simulation-based objectives), the overhead of the modulation term is negligible. When $C_f = O(n)$ (e.g., quadratic benchmarks), we observe empirically that DPSO incurs ${\sim}15\text{--}25\%$ additional wall-clock time per iteration (see Section~\ref{sec:experiments}), which is offset by improved solution quality and reduced iteration counts to convergence.
\end{remark}

\subsection{Discussion of design choices}
\label{sec:design}

\paragraph{Why repulsion proportional to similarity?}

An alternative design repulsion proportional to \textit{dissimilarity}—would scale the modulation as an increasing function of
$\|\mathbf{p}_i^{(t)} - \mathbf{g}^{(t)}\|_2$, e.g., $S_f = \|\mathbf{p}_i^{(t)} - \mathbf{g}^{(t)}\|_2^2 / (2\sigma_k^2)$.
We argue this is undesirable: particles whose personal bests already differ substantially from the global best are already exploring diverse regions. Adding stronger repulsion to these particles risks destabilizing their search and wasting function evaluations on increasingly remote regions. The RPS design targets intervention precisely where it is needed—at particles converging toward $\mathbf{g}^{(t)}$—while leaving well-separated particles undisturbed.

\paragraph{Choice of $\sigma$.}

The bandwidth $\sigma$ should reflect the scale of the search space. In our experiments, we set $\sigma = \beta \cdot \|\mathbf{ub} - \mathbf{lb}\|_2$ with $\beta \in [0.05, 0.2]$, ensuring that the activation region is a fixed fraction of the domain diameter. Adaptive schedules for $\sigma$ (e.g., decaying with iteration count) are a natural extension but are not explored in this work.

\paragraph{Interaction with inertia decay.}

Standard practice often decays $\omega$ linearly from $\omega_{\max}$ to $\omega_{\min}$ over the iteration budget~\citep{shi1998modified}. DPSO is compatible with this schedule: the modulation term provides an additional exploration mechanism that compensates for reduced inertia in later iterations, preventing the swarm from stalling even when $\omega$ is small.

\section{Experiments and Discussion}
\label{sec:experiments}

We evaluate DPSO against standard PSO on 36 benchmark functions drawn from \cite{jamil2013literature}, varying the problem dimensionality to assess scalability. All experiments use identical swarm
configurations and random seeds, isolating the effect of the divergence
modulation term.

\subsection{Experimental setup}
\label{sec:setup}

\paragraph{Benchmark functions.}
Table~\ref{tab:benchmarks} lists the 36 test functions used in our evaluation.
The suite comprises 15 unimodal functions (e.g., Sphere, Rosenbrock, Schwefel
variants, Zakharov, Cigar) and 21 multimodal functions (e.g., Rastrigin, Ackley,
Levy, Weierstrass, HappyCat), providing a comprehensive test of the
exploration--exploitation trade-off across a wide range of landscape
characteristics. See Appendix \ref{app:functions} for additional information on these benchmark functions.

\paragraph{Algorithm configuration.}
Both PSO and DPSO use $N = 40$ particles, $T = 1{,}000$ iterations, and
Clerc's constriction coefficients~\citep{clerc2002particle}: $\omega = 0.7298$,
$c_1 = c_2 = 1.49618$. Velocities are clamped to
$v_{\max} = 0.2 \cdot (\mathbf{ub} - \mathbf{lb})$ per dimension, and
positions are clipped to the search bounds. For DPSO, we set $c_3 = 1.0$ and
$\beta = 0.1$ (so $\sigma = 0.1 \cdot \|\mathbf{ub} - \mathbf{lb}\|_2$).

\paragraph{Protocol.}
Each algorithm--function--dimension combination is run $30$ times with
independent random seeds (master seed $= 42$). We test dimensions
$D \in \{10, 30, 50\}$. All experiments are implemented in JAX with
\texttt{jax.lax.scan} for JIT-compiled iteration loops and
\texttt{jax.vmap} for vectorized fitness evaluation. Wall-clock times are
measured with \texttt{jax.block\_until\_ready} to ensure accurate timing.

\subsection{Results}
\label{sec:results}

\paragraph{Solution quality.}
Tables~\ref{tab:results-unimodal} and~\ref{tab:results-multimodal} report the best fitness achieved by each algorithm
(mean $\pm$ standard deviation over 30 runs). DPSO outperforms PSO on the
majority of multimodal benchmark--dimension combinations, with the most
substantial improvements in higher dimensions.

\input{outputs/tables/results_unimodal.tex}
\input{outputs/tables/results_multimodal.tex}

The largest single multimodal improvement is on Pinter at $D = 10$, where
DPSO achieves an $8.4\times$ reduction in mean fitness ($3.88$ vs.\ $32.5$),
indicating that the modulation term effectively prevents stagnation in
Pinter's many narrow basins. On Ackley, DPSO achieves a $2.8\times$
improvement at $D = 30$ (mean $4.34 \times 10^{-1}$ vs.\ $1.20$) and
$3.6\times$ at $D = 50$ ($8.98 \times 10^{-1}$ vs.\ $3.27$). Further
multimodal wins include Levy at $D = 30$ ($2.6\times$; $1.60$ vs.\ $4.18$)
and Salomon, where DPSO consistently improves by $1.2$--$1.4\times$ across
all dimensions. On Rastrigin, DPSO improves by $38\%$ at $D = 10$ and
$10\%$ at $D = 30$, though PSO recovers the advantage at $D = 50$ (where
DPSO is $9.5\%$ worse). On Griewank at $D = 50$, DPSO reduces the mean
fitness by $1.7\times$ ($7.02 \times 10^{-2}$ vs.\ $1.19 \times 10^{-1}$).
Among nominally unimodal functions, Rosenbrock at $D = 50$ shows a
$2.7\times$ improvement ($4{,}816$ vs.\ $12{,}835$), where DPSO's
exploration helps escape the narrow curved valley that traps standard PSO at
high dimension.

Conversely, PSO dominates on the unimodal Sphere function across all
dimensions, converging to near-machine-precision solutions
($\sim 10^{-19}$ at $D = 30$) while DPSO's modulation term introduces
residual exploration that prevents exact convergence (mean $\sim 10^{-1}$).
This is expected: on convex landscapes with a single basin of attraction,
any repulsive force is counterproductive. PSO also wins on several
multimodal functions whose basin structure is relatively simple---including
Weierstrass, Whitley, Exponential, Qing, Alpine1, CosineMixture, and
StretchedV---and marginally outperforms DPSO on Griewank at $D = 30$ and
Schwefel at $D = 30$, where the standard algorithm's convergence pressure
suffices to locate good solutions.

\paragraph{Convergence dynamics.}
Figures~\ref{fig:convergence_unimodal}--\ref{fig:convergence_multimodal} show
the convergence curves (median and interquartile range over 30 runs) for all
benchmarks across dimensions. The plots reveal two distinct regimes:

\begin{enumerate}
    \item \textbf{Early iterations} ($t < 200$): Both algorithms converge at
    similar rates, as the modulation term has little effect when personal bests
    are spread across the search space ($\kappa \approx 0$ for most particles);

    \item \textbf{Late iterations} ($t > 200$): Standard PSO plateaus as the
    swarm collapses around $\mathbf{g}^{(t)}$, while DPSO continues to improve by activating the repulsion mechanism on converged particles. This
    effect is most pronounced on Ackley and Rastrigin, where the interquartile range (IQR)  band for DPSO remains tight and the median continues to decrease, indicating consistent progress across runs.
\end{enumerate}

\paragraph{Computational overhead.}
Table~\ref{tab:timing} reports the mean wall-clock time per run. DPSO incurs
a modest overhead of approximately $15$--$25\%$ relative to PSO, consistent
with the $O(n)$ additional operations per particle per iteration
(Proposition~\ref{prop:complexity}). For all tested configurations, a single
run completes in under $0.5$ seconds on a standard CPU, confirming that the
modulation term does not impose a meaningful computational burden.

\input{outputs/tables/timing.tex}

\subsection{Discussion}
\label{sec:discussion}

The experimental results support the central thesis of this work: the
divergence modulation term preserves exploration in regions of the search
space where standard PSO's dynamics lead to premature convergence, while
adding negligible computational cost.

\paragraph{Dimension scaling.}
A consistent pattern emerges across many multimodal functions: DPSO's
advantage grows with dimensionality. At $D = 10$, the search space is small
enough that standard PSO's random perturbations suffice to avoid most local
minima. At $D = 30$ and $D = 50$, the exponential growth in the number of
local optima makes the systematic exploration provided by the modulation
term increasingly valuable. This trend is visible on Ackley, where DPSO's
improvement rises from $2.8\times$ at $D = 30$ to $3.6\times$ at $D = 50$,
and on Griewank ($1.2\times$ at $D = 10$ to $1.7\times$ at $D = 50$). On
Pinter, the improvement is $8.4\times$ at $D = 10$, where PSO easily
stagnates, and remains $1.3\times$ even at $D = 50$.

\begin{figure}[t]
    \centering
    \includegraphics[width=\textwidth]{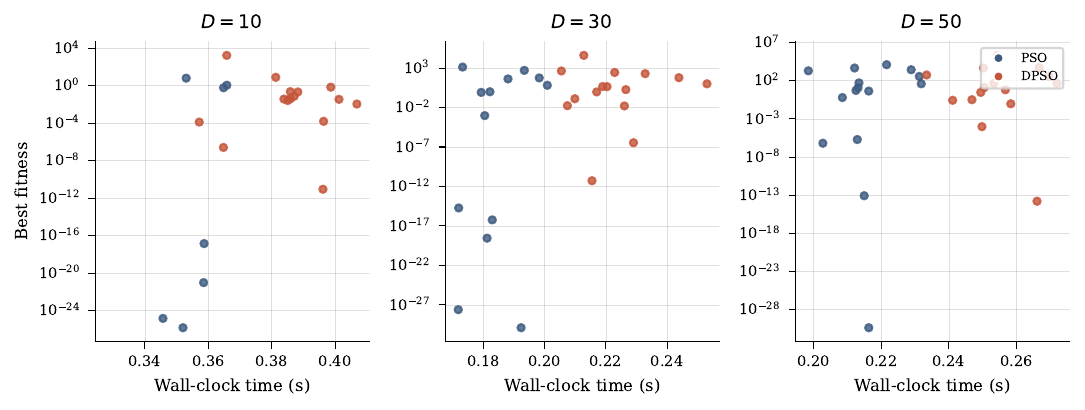}
    \caption{Fitness vs.\ wall-clock time for \textbf{unimodal} benchmarks.
    Each function produces two points: one for {\color{cRPNI}PSO} and one for
    {\color{cOurs}DPSO}. The $y$-axis shows best fitness (log scale) and
    the $x$-axis shows mean wall-clock time (seconds) over 30 runs.}
    \label{fig:tradeoff-unimodal}
\end{figure}

\begin{figure}[t]
    \centering
    \includegraphics[width=\textwidth]{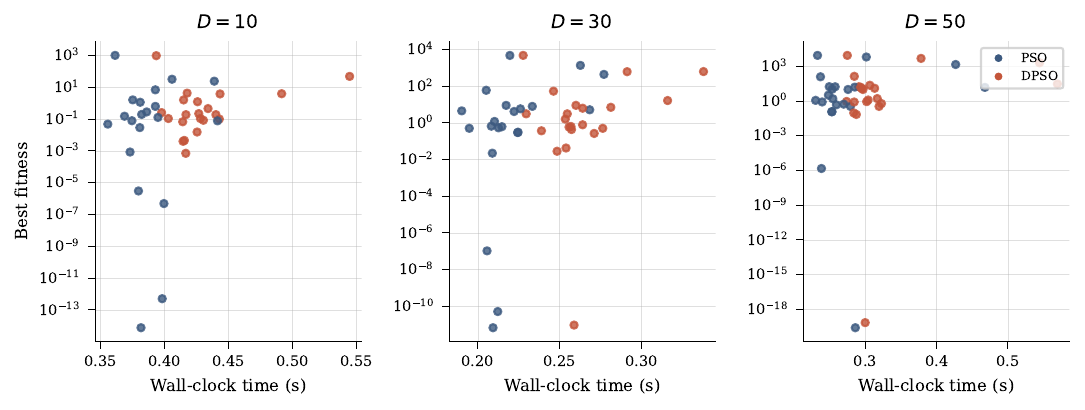}
    \caption{Fitness vs.\ wall-clock time for \textbf{multimodal} benchmarks.
    Layout as in Fig.~\ref{fig:tradeoff-unimodal}.}
    \label{fig:tradeoff-multimodal}
\end{figure}

\paragraph{The unimodal trade-off.}
As seen in Fig.~\ref{fig:tradeoff-unimodal}, the Sphere results illustrate an inherent trade-off: on convex landscapes,
any diversification mechanism slows convergence to the global optimum. The
modulation term, by construction, pushes particles away from
$\mathbf{g}^{(t)}$ when their personal bests converge—precisely the
behaviour that accelerates convergence on unimodal functions. This suggests
that an adaptive schedule for $c_3$ (e.g., decaying with a stagnation
metric) could recover PSO's convergence speed on easy landscapes while
retaining DPSO's exploration benefits on harder ones. We leave this extension
to future work.

\paragraph{Variance reduction.}
Beyond mean performance, DPSO consistently exhibits lower variance across
runs on the functions where it outperforms PSO. For example, on Pinter at
$D = 10$, PSO's standard deviation is $42.1$ while DPSO's is $10.5$ (a
$4\times$ reduction), and on Ackley at $D = 50$, PSO's standard deviation
is $0.93$ versus DPSO's $0.36$ ($2.6\times$ reduction). On Griewank at
$D = 50$, PSO's standard deviation is $0.32$ while DPSO's is $0.067$
(${\approx}5\times$ lower). This reliability is a practical advantage in
applications where a single optimization run must produce a good solution.

\section{Conclusions}
\label{sec:conclusion}

We introduced Divergence-guided Particle Swarm Optimization (DPSO), which
augments the standard PSO velocity update with a modulation term that applies
repulsive forces to particles whose personal bests have converged near the
global best. The modulation is governed by a Gaussian similarity kernel,
which we showed is equivalent to an exponentially decaying function of the
KL divergence between Gaussian-embedded personal and global bests. This
connection to $f$-divergences provides a principled framework for designing
alternative kernels with different exploration profiles.

Experiments on 36 benchmark functions across dimensions $D \in \{10, 30, 50\}$
demonstrated that DPSO frequently outperforms standard PSO on multimodal
landscapes, with improvements that often grow with dimensionality.
DPSO also exhibited lower variance across runs on the functions where it
improves mean performance, indicating more reliable convergence. The
computational overhead was modest ($15$--$25\%$), and the method required
only one additional hyperparameter ($c_3$) beyond standard PSO.

Future work includes adaptive scheduling of $c_3$ and $\sigma$ to balance
exploration and exploitation across the optimization trajectory, evaluation
on real-world black-box optimization problems, and investigation of
alternative $f$-divergence kernels (e.g., Hellinger, Jensen--Shannon) as
discussed in Remark~\ref{rem:generalization}.

\section{Acknowledgments}

The author thanks Dr. Georges Spyrides for his comments and feedback on this work. Any remaining errors are the author's own.

\bibliographystyle{abbrvnat}
\bibliography{references}

\appendix
\section{Convergence Plots}
\label{app:convergence}

The following figures show convergence curves for all remaining benchmark
functions not included in the main text. Layout is identical to
Figures~\ref{fig:convergence_unimodal}--\ref{fig:convergence_multimodal}.

\begin{figure}[H]
    \centering
    \includegraphics[width=\textwidth]{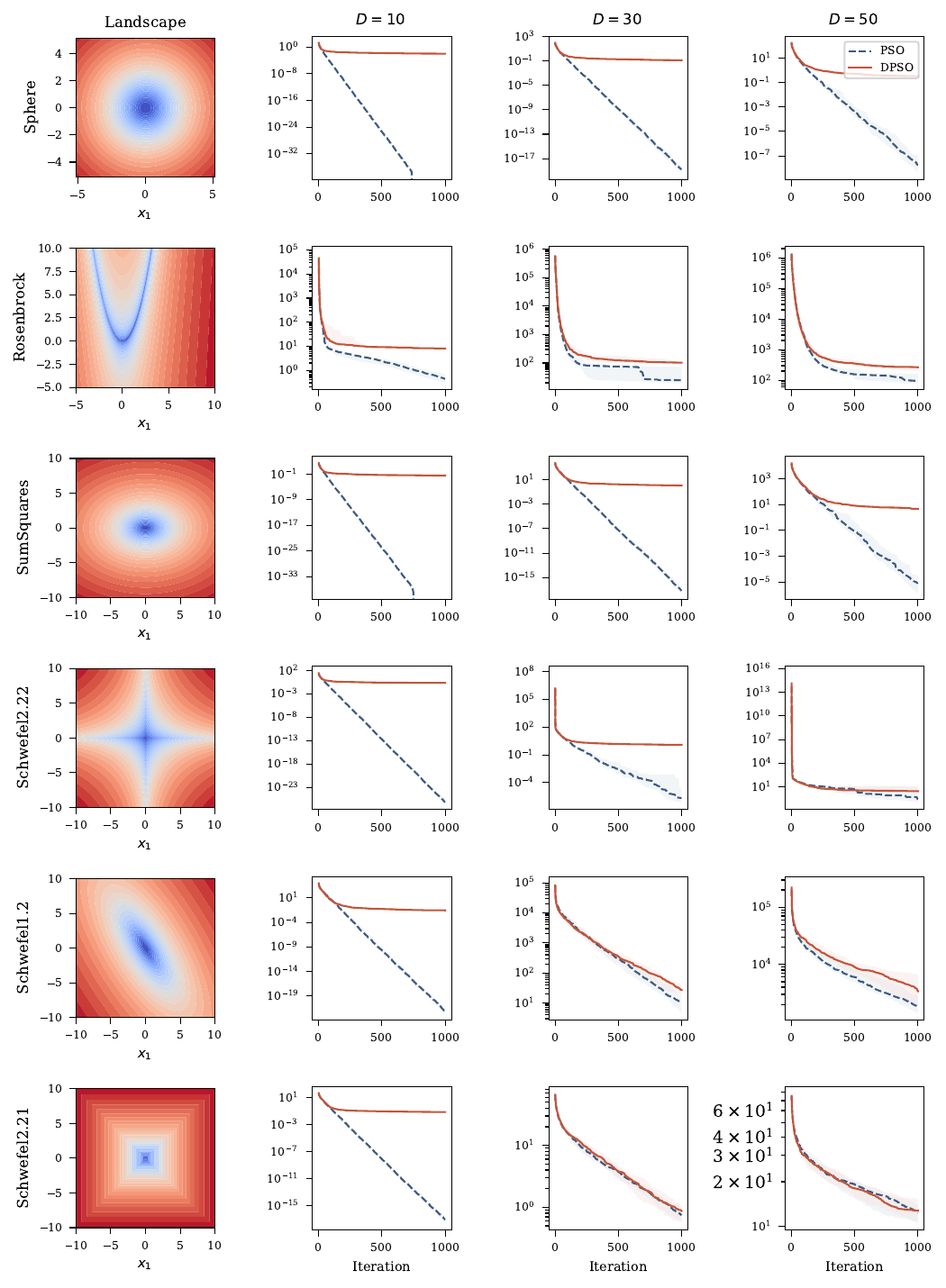}
    \caption{Unimodal benchmarks (part~1).
    Left column: 2D function landscape (log-scaled contour).
    Remaining columns: convergence curves for {\color{cRPNI}PSO} (dashed) and
    {\color{cOurs}DPSO} (solid) at $D \in \{10, 30, 50\}$.
    Lines show the median over 30 runs; shaded bands show the IQR.}
    \label{fig:convergence_unimodal}
\end{figure}

\begin{figure}[H]
    \centering
    \includegraphics[width=\textwidth]{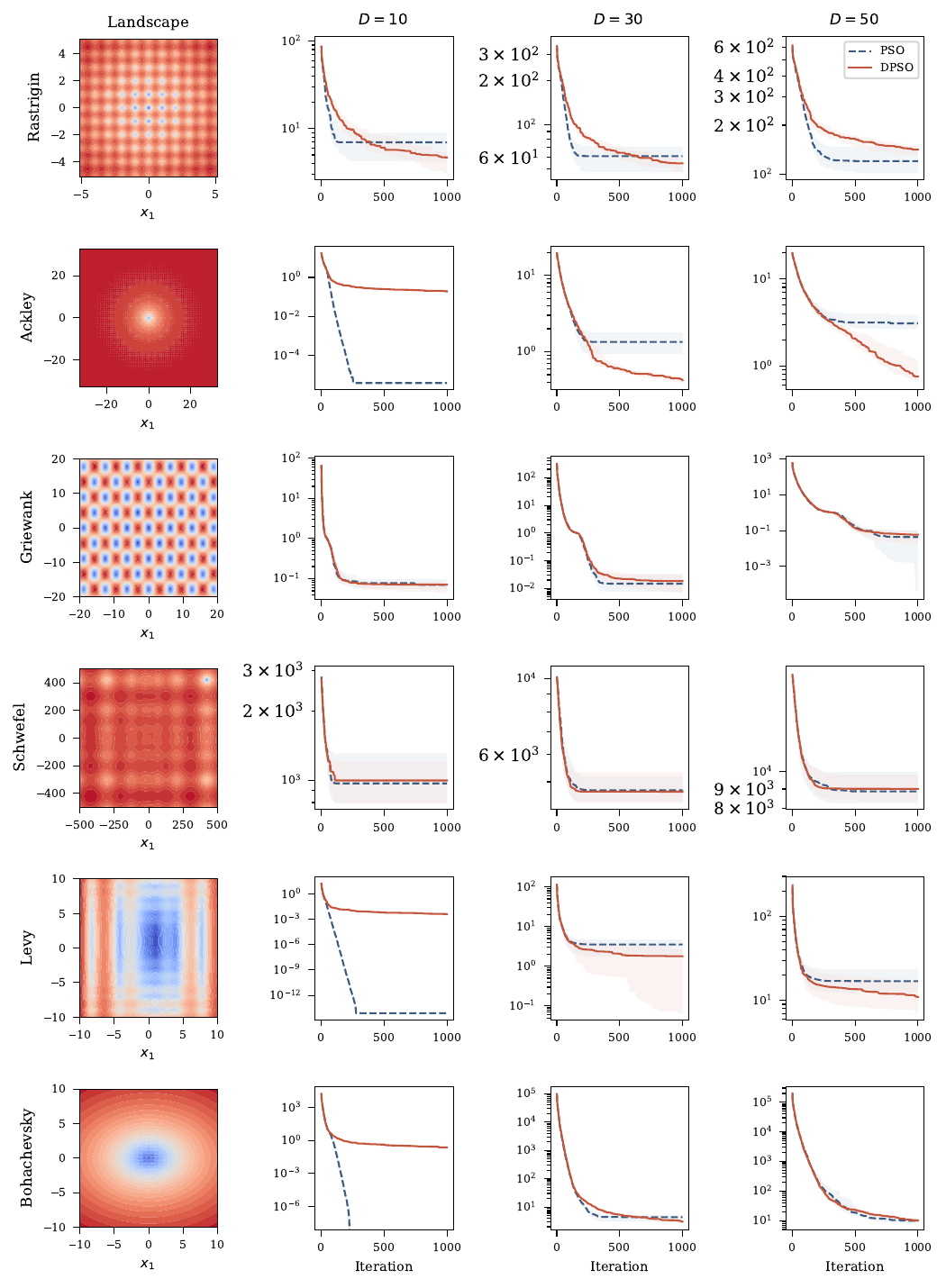}
    \caption{Multimodal benchmarks (part~1). Layout as in Fig.~\ref{fig:convergence_unimodal}.}
    \label{fig:convergence_multimodal}
\end{figure}

\begin{figure}[H]
    \centering
    \includegraphics[width=\textwidth]{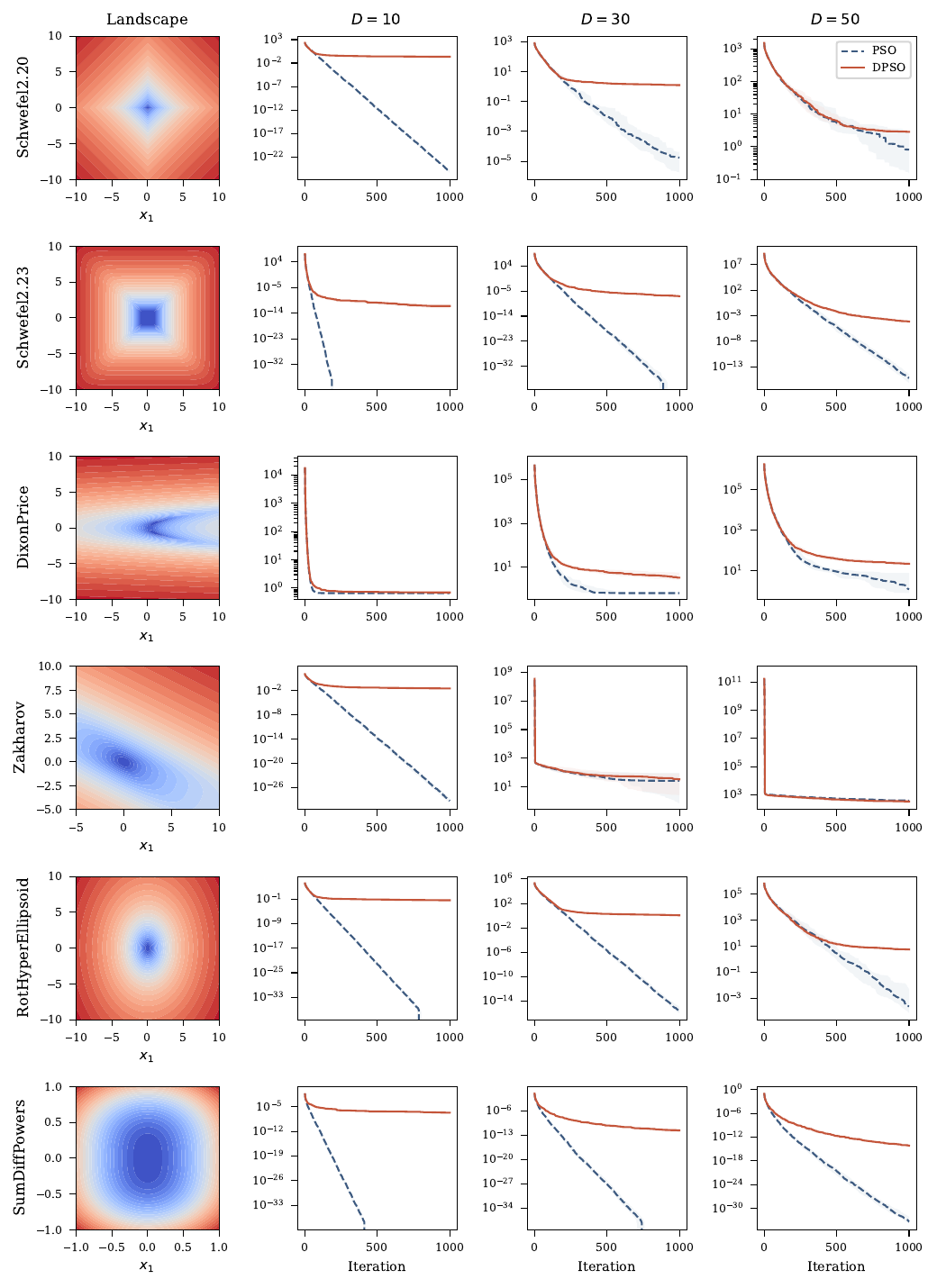}
    \caption{Unimodal benchmarks (part~2).}
    \label{fig:convergence_unimodal_2}
\end{figure}

\begin{figure}[H]
    \centering
    \includegraphics[width=\textwidth]{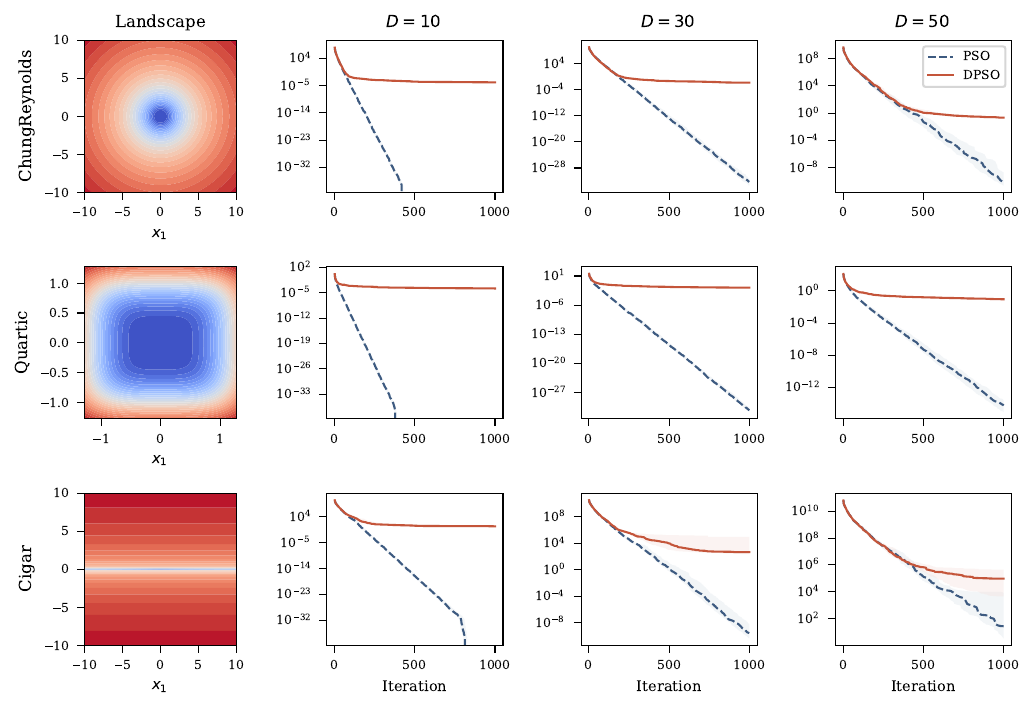}
    \caption{Unimodal benchmarks (part~3).}
    \label{fig:convergence_unimodal_3}
\end{figure}

\begin{figure}[H]
    \centering
    \includegraphics[width=\textwidth]{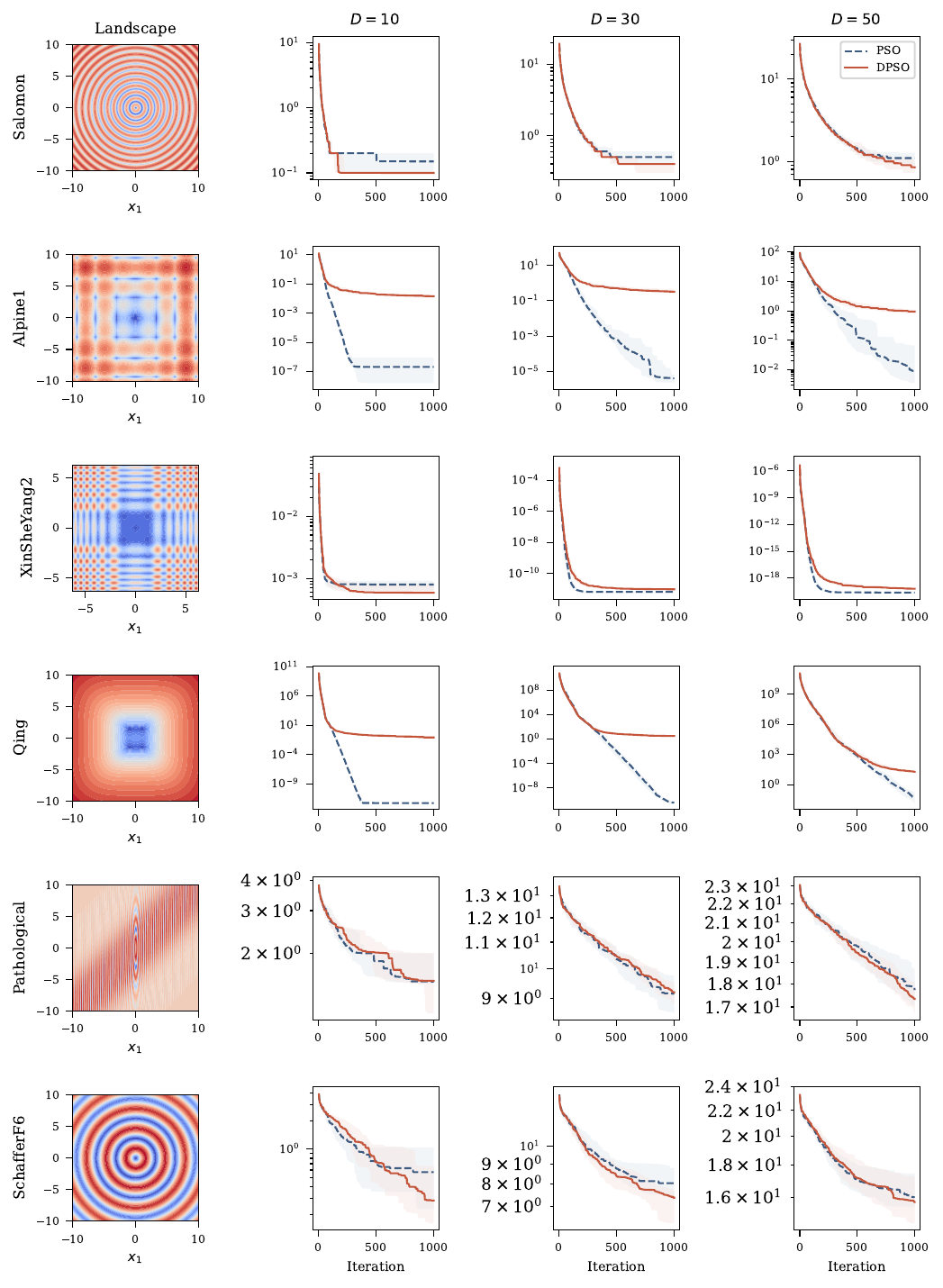}
    \caption{Multimodal benchmarks (part~2).}
    \label{fig:convergence_multimodal_2}
\end{figure}

\begin{figure}[H]
    \centering
    \includegraphics[width=\textwidth]{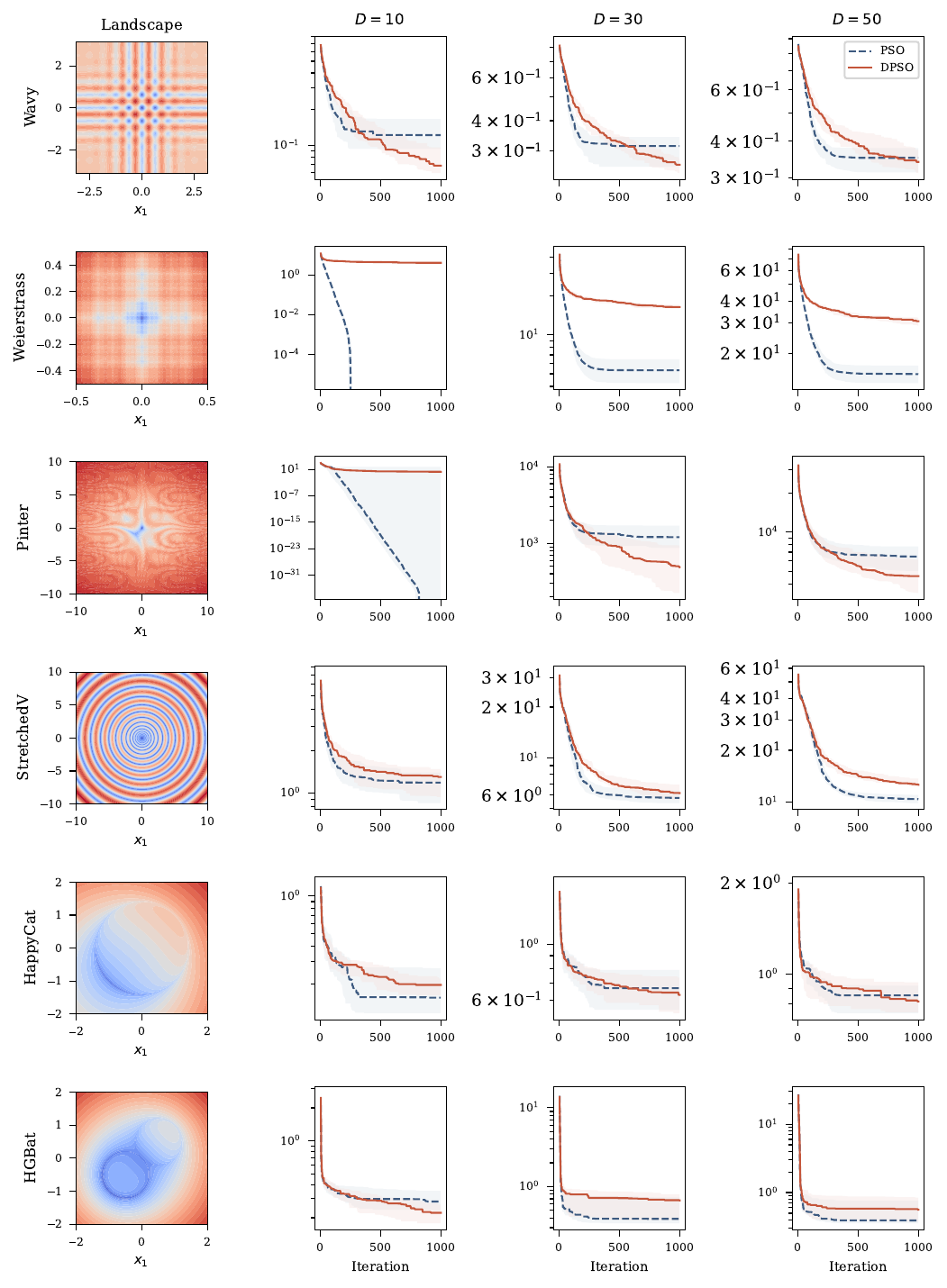}
    \caption{Multimodal benchmarks (part~3).}
    \label{fig:convergence_multimodal_3}
\end{figure}

\begin{figure}[H]
    \centering
    \includegraphics[width=\textwidth]{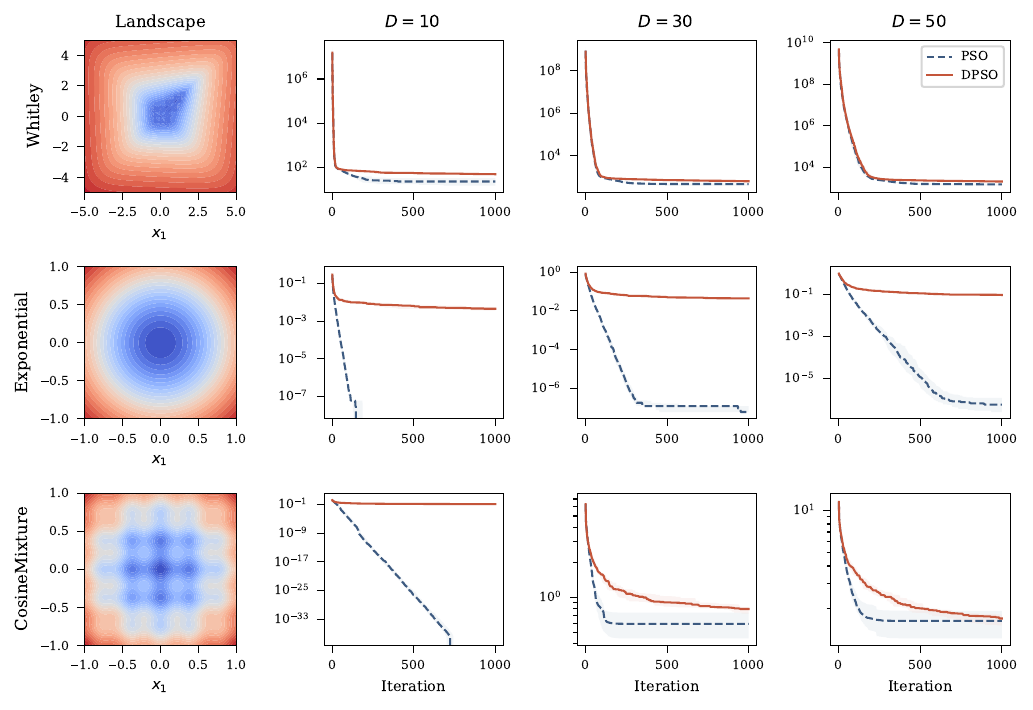}
    \caption{Multimodal benchmarks (part~4).}
    \label{fig:convergence_multimodal_4}
\end{figure}

\section{Benchmark functions}
\label{app:functions}

The Table \ref{tab:benchmarks} presents a comprehensive summary of the benchmark functions used

\begin{table}[H]
\centering
\renewcommand{\arraystretch}{1.3}
\caption{Benchmark functions used in the experiments ($D$-dimensional).
All global minima satisfy $f^* = 0$ except Schwefel, where
$f^* \approx 0$.}
\label{tab:benchmarks}
\scriptsize
\setlength{\tabcolsep}{4pt}
\begin{tabularx}{\textwidth}{l >{\raggedright\arraybackslash}X c c}
\toprule
Function & Definition & Bounds & $f^*$ \\
\midrule
\rowcolor{cGroupHd}
\multicolumn{4}{l}{\textit{Unimodal}} \\
Sphere            & $\sum x_i^2$ & $[-5.12,\;5.12]^D$     & $0$ \\
Rosenbrock        & $\sum\bigl[100(x_{i+1}\!-\!x_i^2)^2+(1\!-\!x_i)^2\bigr]$ & $[-5,\;10]^D$ & $0$ \\
SumSquares        & $\sum i\,x_i^2$ & $[-10,\;10]^D$       & $0$ \\
Schwefel\,2.22    & $\sum|x_i|+\prod|x_i|$ & $[-10,\;10]^D$ & $0$ \\
Schwefel\,1.2     & $\sum_{i}\bigl(\sum_{j=1}^{i}x_j\bigr)^{\!2}$ & $[-100,\;100]^D$ & $0$ \\
Schwefel\,2.21    & $\max_i|x_i|$ & $[-100,\;100]^D$      & $0$ \\
Schwefel\,2.20    & $\sum|x_i|$ & $[-100,\;100]^D$        & $0$ \\
Schwefel\,2.23    & $\sum x_i^{10}$ & $[-10,\;10]^D$      & $0$ \\
DixonPrice        & $(x_1\!-\!1)^2+\sum_{i=2}^{D}i(2x_i^2\!-\!x_{i-1})^2$ & $[-10,\;10]^D$ & $0$ \\
Zakharov          & $\sum x_i^2+\bigl(\sum\tfrac{i}{2}x_i\bigr)^{\!2}+\bigl(\sum\tfrac{i}{2}x_i\bigr)^{\!4}$ & $[-5,\;10]^D$ & $0$ \\
RotHyperEllipsoid & $\sum(D\!-\!i\!+\!1)\,x_i^2$ & $[-65.536,\;65.536]^D$ & $0$ \\
SumDiffPowers     & $\sum|x_i|^{i+1}$ & $[-1,\;1]^D$      & $0$ \\
ChungReynolds     & $\bigl(\sum x_i^2\bigr)^{\!2}$ & $[-100,\;100]^D$ & $0$ \\
Quartic           & $\sum i\,x_i^4$ & $[-1.28,\;1.28]^D$  & $0$ \\
Cigar             & $x_1^2+10^6\!\sum_{i=2}^{D}x_i^2$ & $[-100,\;100]^D$ & $0$ \\
\midrule
\rowcolor{cGroupHd}
\multicolumn{4}{l}{\textit{Multimodal}} \\
Rastrigin     & $10D+\sum\bigl[x_i^2-10\cos(2\pi x_i)\bigr]$ & $[-5.12,\;5.12]^D$ & $0$ \\
Ackley        & $-20e^{-0.2\sqrt{\frac{1}{D}\sum x_i^2}}\!-e^{\frac{1}{D}\sum\cos 2\pi x_i}\!+\!20\!+\!e$ & $[-32.768,\;32.768]^D$ & $0$ \\
Griewank      & $1+\tfrac{1}{4000}\sum x_i^2-\prod\cos\!\bigl(\tfrac{x_i}{\sqrt{i}}\bigr)$ & $[-600,\;600]^D$ & $0$ \\
Schwefel      & $418.9829D-\sum x_i\sin\!\bigl(\!\sqrt{|x_i|}\bigr)$ & $[-500,\;500]^D$ & ${\approx}\,0$ \\
Levy          & $\sin^2\!(\pi w_1)\!+\!\sum(w_i\!\!-\!\!1)^2[1\!+\!10\sin^2\!(\pi w_i\!\!+\!\!1)]+(w_D\!\!-\!\!1)^2[1\!+\!\sin^2\!(2\pi w_D)]$;\newline $w_i\!=\!1\!+\!\tfrac{x_i-1}{4}$ & $[-10,\;10]^D$ & $0$ \\
Bohachevsky   & $\sum\bigl[x_i^2\!+\!2x_{i\!+\!1}^2\!-\!0.3\cos 3\pi x_i\!-\!0.4\cos 4\pi x_{i\!+\!1}\!+\!0.7\bigr]$ & $[-100,\;100]^D$ & $0$ \\
Salomon       & $1-\cos\!\bigl(2\pi\!\sqrt{\sum x_i^2}\bigr)+0.1\sqrt{\sum x_i^2}$ & $[-100,\;100]^D$ & $0$ \\
Alpine1       & $\sum|x_i\sin(x_i)+0.1\,x_i|$ & $[-10,\;10]^D$ & $0$ \\
XinSheYang2   & $\bigl(\sum|x_i|\bigr)\exp\!\bigl(\!-\!\sum\sin x_i^2\bigr)$ & $[-2\pi,\;2\pi]^D$ & $0$ \\
Qing          & $\sum(x_i^2-i)^2$ & $[-500,\;500]^D$    & $0$ \\
Pathological  & $\sum\!\bigl[0.5+\tfrac{\sin^2\!\sqrt{100x_i^2+x_{i\!+\!1}^2}\,-\,0.5}{1+0.001(x_i-x_{i\!+\!1})^4}\bigr]$ & $[-100,\;100]^D$ & $0$ \\
SchafferF6    & $\sum\!\bigl[0.5+\tfrac{\sin^2\!\sqrt{x_i^2+x_{i\!+\!1}^2}\,-\,0.5}{(1+0.001(x_i^2+x_{i\!+\!1}^2))^2}\bigr]$ & $[-100,\;100]^D$ & $0$ \\
Wavy          & $1-\tfrac{1}{D}\sum\cos(10x_i)\exp(-x_i^2/2)$ & $[-\pi,\;\pi]^D$ & $0$ \\
Weierstrass   & $\sum_i\!\bigl[\sum_{k=0}^{20}a^k\!\cos\bigl(2\pi b^k(x_i\!+\!0.5)\bigr)\bigr]\!-\!D\!\sum_{k=0}^{20}a^k\!\cos(\pi b^k)$;\newline $a\!=\!0.5,\; b\!=\!3$ & $[-0.5,\;0.5]^D$ & $0$ \\
Pinter        & $\sum ix_i^2+20\!\sum i\sin^2\!\!A_i+\sum i\log_{10}(1\!+\!iB_i^2)$;\newline $A_i\!=\!x_{i\!-\!1}\!\sin x_i\!+\!\sin x_{i\!+\!1}$,\; $B_i\!=\!x_{i\!-\!1}^2\!\!-\!2x_i\!+\!3x_{i\!+\!1}\!\!-\!\cos x_i\!+\!1$ & $[-10,\;10]^D$ & $0$ \\
StretchedV    & $\sum t_i^{1/4}\bigl[\sin^2(50\,t_i^{1/10})\!+\!0.1\bigr]$;\; $t_i\!=\!x_i^2\!+\!x_{i+1}^2$ & $[-10,\;10]^D$ & $0$ \\
HappyCat      & $|\!\sum x_i^2\!-\!D|^{1/4}\!+\!(0.5\!\sum x_i^2\!+\!\sum x_i)/D+0.5$ & $[-2,\;2]^D$ & $0$ \\
HGBat         & $|(\!\sum x_i^2)^2\!-\!(\!\sum x_i)^2|^{1/2}\!+\!(0.5\!\sum x_i^2\!+\!\sum x_i)/D+0.5$ & $[-2,\;2]^D$ & $0$ \\
Whitley       & $\sum_i\!\sum_j\bigl[y_{ij}^2/4000-\cos y_{ij}+1\bigr]$;\newline $y_{ij}\!=\!100(x_i^2\!-\!x_j)^2\!+\!(1\!-\!x_j)^2$ & $[-10.24,\;10.24]^D$ & $0$ \\
Exponential   & $1-\exp(-0.5\sum x_i^2)$ & $[-1,\;1]^D$   & $0$ \\
CosineMixture & $\sum\bigl[x_i^2+0.1(1\!-\!\cos 5\pi x_i)\bigr]$ & $[-1,\;1]^D$ & $0$ \\
\bottomrule
\end{tabularx}
\end{table}

\end{document}

%% file: outputs/tables/results_unimodal.tex
\begin{table}[t]
\centering
\caption{Best fitness values on \textbf{unimodal} benchmarks (mean $\pm$ standard deviation over 30 runs). \colorbox{cBest}{\textbf{Shaded}} indicates the better result.}
\label{tab:results-unimodal}
\footnotesize
\setlength{\tabcolsep}{3.5pt}
\resizebox{\textwidth}{!}{%
\begin{tabular}{l cc cc cc}
\toprule
\rowcolor{cGroupHd}
Function & \multicolumn{2}{c}{$D=10$} & \multicolumn{2}{c}{$D=30$} & \multicolumn{2}{c}{$D=50$} \\
\cmidrule(lr){2-3}\cmidrule(lr){4-5}\cmidrule(lr){6-7}
\rowcolor{cGroupHd}
 & PSO & DPSO & PSO & DPSO & PSO & DPSO \\
\midrule
Sphere & \cellcolor{cBest}\textbf{$0 \pm 0$} & $1.10\!\times\!10^{-2} \pm 3.60\!\times\!10^{-3}$ & \cellcolor{cBest}\textbf{$2.85\!\times\!10^{-19} \pm 3.73\!\times\!10^{-19}$} & $1.30\!\times\!10^{-1} \pm 2.34\!\times\!10^{-2}$ & \cellcolor{cBest}\textbf{$1.99\!\times\!10^{-6} \pm 6.72\!\times\!10^{-6}$} & $3.31\!\times\!10^{-1} \pm 4.96\!\times\!10^{-2}$ \\
Rosenbrock & \cellcolor{cBest}\textbf{$6.36 \pm 2.90\!\times\!10^{1}$} & $7.83 \pm 1.19$ & \cellcolor{cBest}\textbf{$4.23\!\times\!10^{1} \pm 3.02\!\times\!10^{1}$} & $2.75\!\times\!10^{2} \pm 6.16\!\times\!10^{2}$ & $1.28\!\times\!10^{4} \pm 2.31\!\times\!10^{4}$ & \cellcolor{cBest}\textbf{$4.82\!\times\!10^{3} \pm 1.43\!\times\!10^{4}$} \\
SumSquares & \cellcolor{cBest}\textbf{$0 \pm 0$} & $3.75\!\times\!10^{-2} \pm 1.27\!\times\!10^{-2}$ & \cellcolor{cBest}\textbf{$5.88\!\times\!10^{-17} \pm 1.44\!\times\!10^{-16}$} & $4.43 \pm 1.80\!\times\!10^{1}$ & $5.33\!\times\!10^{1} \pm 1.31\!\times\!10^{2}$ & \cellcolor{cBest}\textbf{$4.48\!\times\!10^{1} \pm 1.17\!\times\!10^{2}$} \\
Schwefel2.22 & \cellcolor{cBest}\textbf{$1.47\!\times\!10^{-26} \pm 1.76\!\times\!10^{-26}$} & $2.30\!\times\!10^{-1} \pm 2.80\!\times\!10^{-2}$ & \cellcolor{cBest}\textbf{$1.00 \pm 3.00$} & $1.86 \pm 2.48$ & \cellcolor{cBest}\textbf{$4.47 \pm 5.57$} & $6.53 \pm 5.50$ \\
Schwefel1.2 & \cellcolor{cBest}\textbf{$9.57\!\times\!10^{-22} \pm 3.07\!\times\!10^{-21}$} & $2.59\!\times\!10^{-2} \pm 1.01\!\times\!10^{-2}$ & $5.10\!\times\!10^{2} \pm 1.50\!\times\!10^{3}$ & \cellcolor{cBest}\textbf{$1.96\!\times\!10^{2} \pm 8.95\!\times\!10^{2}$} & \cellcolor{cBest}\textbf{$2.80\!\times\!10^{3} \pm 2.24\!\times\!10^{3}$} & $5.38\!\times\!10^{3} \pm 4.74\!\times\!10^{3}$ \\
Schwefel2.21 & \cellcolor{cBest}\textbf{$1.40\!\times\!10^{-17} \pm 1.80\!\times\!10^{-17}$} & $7.28\!\times\!10^{-2} \pm 1.34\!\times\!10^{-2}$ & \cellcolor{cBest}\textbf{$8.33\!\times\!10^{-1} \pm 3.94\!\times\!10^{-1}$} & $9.57\!\times\!10^{-1} \pm 5.12\!\times\!10^{-1}$ & \cellcolor{cBest}\textbf{$1.24\!\times\!10^{1} \pm 2.65$} & $1.28\!\times\!10^{1} \pm 2.87$ \\
Schwefel2.20 & \cellcolor{cBest}\textbf{$1.45\!\times\!10^{-25} \pm 1.98\!\times\!10^{-25}$} & $2.14\!\times\!10^{-1} \pm 4.23\!\times\!10^{-2}$ & \cellcolor{cBest}\textbf{$9.49\!\times\!10^{-4} \pm 3.36\!\times\!10^{-3}$} & $4.57 \pm 1.79\!\times\!10^{1}$ & $5.42 \pm 1.80\!\times\!10^{1}$ & \cellcolor{cBest}\textbf{$2.99 \pm 5.79\!\times\!10^{-1}$} \\
Schwefel2.23 & \cellcolor{cBest}\textbf{$0 \pm 0$} & $8.86\!\times\!10^{-12} \pm 1.58\!\times\!10^{-11}$ & \cellcolor{cBest}\textbf{$0 \pm 0$} & $3.48\!\times\!10^{-7} \pm 4.68\!\times\!10^{-7}$ & \cellcolor{cBest}\textbf{$8.57\!\times\!10^{-14} \pm 3.38\!\times\!10^{-13}$} & $9.67\!\times\!10^{-5} \pm 1.00\!\times\!10^{-4}$ \\
DixonPrice & \cellcolor{cBest}\textbf{$6.00\!\times\!10^{-1} \pm 2.00\!\times\!10^{-1}$} & $6.97\!\times\!10^{-1} \pm 3.85\!\times\!10^{-2}$ & \cellcolor{cBest}\textbf{$6.56 \pm 2.21\!\times\!10^{1}$} & $1.01\!\times\!10^{1} \pm 2.29\!\times\!10^{1}$ & \cellcolor{cBest}\textbf{$4.04\!\times\!10^{1} \pm 9.65\!\times\!10^{1}$} & $4.61\!\times\!10^{1} \pm 8.15\!\times\!10^{1}$ \\
Zakharov & $1.12 \pm 6.04$ & \cellcolor{cBest}\textbf{$3.49\!\times\!10^{-2} \pm 1.11\!\times\!10^{-2}$} & \cellcolor{cBest}\textbf{$5.67\!\times\!10^{1} \pm 7.86\!\times\!10^{1}$} & $6.09\!\times\!10^{1} \pm 6.84\!\times\!10^{1}$ & $3.70\!\times\!10^{2} \pm 1.62\!\times\!10^{2}$ & \cellcolor{cBest}\textbf{$3.49\!\times\!10^{2} \pm 1.58\!\times\!10^{2}$} \\
RotHyperEllipsoid & \cellcolor{cBest}\textbf{$0 \pm 0$} & $4.04\!\times\!10^{-2} \pm 1.27\!\times\!10^{-2}$ & \cellcolor{cBest}\textbf{$1.90\!\times\!10^{-15} \pm 3.86\!\times\!10^{-15}$} & $4.31\!\times\!10^{2} \pm 1.70\!\times\!10^{3}$ & $2.00\!\times\!10^{3} \pm 6.54\!\times\!10^{3}$ & \cellcolor{cBest}\textbf{$5.78\!\times\!10^{2} \pm 1.83\!\times\!10^{3}$} \\
SumDiffPowers & \cellcolor{cBest}\textbf{$0 \pm 0$} & $2.55\!\times\!10^{-7} \pm 1.86\!\times\!10^{-7}$ & \cellcolor{cBest}\textbf{$0 \pm 0$} & $5.31\!\times\!10^{-12} \pm 6.91\!\times\!10^{-12}$ & \cellcolor{cBest}\textbf{$4.30\!\times\!10^{-31} \pm 2.25\!\times\!10^{-30}$} & $1.64\!\times\!10^{-14} \pm 2.68\!\times\!10^{-14}$ \\
ChungReynolds & \cellcolor{cBest}\textbf{$0 \pm 0$} & $1.30\!\times\!10^{-4} \pm 7.64\!\times\!10^{-5}$ & \cellcolor{cBest}\textbf{$2.48\!\times\!10^{-28} \pm 1.15\!\times\!10^{-27}$} & $1.71\!\times\!10^{-2} \pm 6.07\!\times\!10^{-3}$ & \cellcolor{cBest}\textbf{$6.51\!\times\!10^{-7} \pm 3.27\!\times\!10^{-6}$} & $2.77\!\times\!10^{-1} \pm 3.46\!\times\!10^{-1}$ \\
Quartic & \cellcolor{cBest}\textbf{$0 \pm 0$} & $1.57\!\times\!10^{-4} \pm 9.09\!\times\!10^{-5}$ & \cellcolor{cBest}\textbf{$1.28\!\times\!10^{-30} \pm 2.96\!\times\!10^{-30}$} & $1.59\!\times\!10^{-2} \pm 4.79\!\times\!10^{-3}$ & $6.26\!\times\!10^{-1} \pm 2.56$ & \cellcolor{cBest}\textbf{$9.65\!\times\!10^{-2} \pm 2.26\!\times\!10^{-2}$} \\
Cigar & \cellcolor{cBest}\textbf{$0 \pm 0$} & $1.71\!\times\!10^{3} \pm 5.21\!\times\!10^{3}$ & \cellcolor{cBest}\textbf{$1.33\!\times\!10^{3} \pm 3.40\!\times\!10^{3}$} & $4.06\!\times\!10^{4} \pm 6.40\!\times\!10^{4}$ & \cellcolor{cBest}\textbf{$4.69\!\times\!10^{3} \pm 1.22\!\times\!10^{4}$} & $2.13\!\times\!10^{5} \pm 2.26\!\times\!10^{5}$ \\
\bottomrule
\end{tabular}}
\end{table}

%% file: outputs/tables/results_multimodal.tex
\begin{table}[t]
\centering
\caption{Best fitness values on \textbf{multimodal} benchmarks (mean $\pm$ standard deviation over 30 runs). \colorbox{cBest}{\textbf{Shaded}} indicates the better result.}
\label{tab:results-multimodal}
\footnotesize
\setlength{\tabcolsep}{3.5pt}
\resizebox{\textwidth}{!}{%
\begin{tabular}{l cc cc cc}
\toprule
\rowcolor{cGroupHd}
Function & \multicolumn{2}{c}{$D=10$} & \multicolumn{2}{c}{$D=30$} & \multicolumn{2}{c}{$D=50$} \\
\cmidrule(lr){2-3}\cmidrule(lr){4-5}\cmidrule(lr){6-7}
\rowcolor{cGroupHd}
 & PSO & DPSO & PSO & DPSO & PSO & DPSO \\
\midrule
Rastrigin & $7.00 \pm 3.14$ & \cellcolor{cBest}\textbf{$4.36 \pm 1.58$} & $5.98\!\times\!10^{1} \pm 1.52\!\times\!10^{1}$ & \cellcolor{cBest}\textbf{$5.36\!\times\!10^{1} \pm 1.00\!\times\!10^{1}$} & \cellcolor{cBest}\textbf{$1.27\!\times\!10^{2} \pm 2.72\!\times\!10^{1}$} & $1.39\!\times\!10^{2} \pm 1.94\!\times\!10^{1}$ \\
Ackley & \cellcolor{cBest}\textbf{$3.05\!\times\!10^{-6} \pm 1.53\!\times\!10^{-6}$} & $1.93\!\times\!10^{-1} \pm 4.06\!\times\!10^{-2}$ & $1.20 \pm 7.81\!\times\!10^{-1}$ & \cellcolor{cBest}\textbf{$4.34\!\times\!10^{-1} \pm 5.78\!\times\!10^{-2}$} & $3.27 \pm 9.33\!\times\!10^{-1}$ & \cellcolor{cBest}\textbf{$8.98\!\times\!10^{-1} \pm 3.62\!\times\!10^{-1}$} \\
Griewank & $7.97\!\times\!10^{-2} \pm 4.11\!\times\!10^{-2}$ & \cellcolor{cBest}\textbf{$6.87\!\times\!10^{-2} \pm 3.20\!\times\!10^{-2}$} & \cellcolor{cBest}\textbf{$2.23\!\times\!10^{-2} \pm 2.21\!\times\!10^{-2}$} & $2.84\!\times\!10^{-2} \pm 3.01\!\times\!10^{-2}$ & $1.19\!\times\!10^{-1} \pm 3.23\!\times\!10^{-1}$ & \cellcolor{cBest}\textbf{$7.02\!\times\!10^{-2} \pm 6.74\!\times\!10^{-2}$} \\
Schwefel & $1.03\!\times\!10^{3} \pm 3.28\!\times\!10^{2}$ & \cellcolor{cBest}\textbf{$9.86\!\times\!10^{2} \pm 2.57\!\times\!10^{2}$} & \cellcolor{cBest}\textbf{$4.76\!\times\!10^{3} \pm 7.59\!\times\!10^{2}$} & $4.79\!\times\!10^{3} \pm 6.41\!\times\!10^{2}$ & $8.99\!\times\!10^{3} \pm 1.04\!\times\!10^{3}$ & \cellcolor{cBest}\textbf{$8.98\!\times\!10^{3} \pm 1.08\!\times\!10^{3}$} \\
Levy & \cellcolor{cBest}\textbf{$7.64\!\times\!10^{-15} \pm 0$} & $4.05\!\times\!10^{-3} \pm 1.29\!\times\!10^{-3}$ & $4.18 \pm 3.56$ & \cellcolor{cBest}\textbf{$1.60 \pm 1.69$} & $1.73\!\times\!10^{1} \pm 6.81$ & \cellcolor{cBest}\textbf{$1.19\!\times\!10^{1} \pm 5.97$} \\
Bohachevsky & \cellcolor{cBest}\textbf{$4.88\!\times\!10^{-2} \pm 2.00\!\times\!10^{-1}$} & $2.58\!\times\!10^{-1} \pm 9.58\!\times\!10^{-2}$ & $4.52 \pm 2.02$ & \cellcolor{cBest}\textbf{$3.11 \pm 6.03\!\times\!10^{-1}$} & $1.01\!\times\!10^{1} \pm 2.94$ & \cellcolor{cBest}\textbf{$1.01\!\times\!10^{1} \pm 2.10$} \\
Salomon & $1.53\!\times\!10^{-1} \pm 5.62\!\times\!10^{-2}$ & \cellcolor{cBest}\textbf{$1.10\!\times\!10^{-1} \pm 3.00\!\times\!10^{-2}$} & $5.07\!\times\!10^{-1} \pm 1.46\!\times\!10^{-1}$ & \cellcolor{cBest}\textbf{$3.70\!\times\!10^{-1} \pm 7.37\!\times\!10^{-2}$} & $1.16 \pm 4.12\!\times\!10^{-1}$ & \cellcolor{cBest}\textbf{$9.33\!\times\!10^{-1} \pm 2.71\!\times\!10^{-1}$} \\
Alpine1 & \cellcolor{cBest}\textbf{$4.92\!\times\!10^{-7} \pm 5.48\!\times\!10^{-7}$} & $1.55\!\times\!10^{-2} \pm 5.74\!\times\!10^{-3}$ & \cellcolor{cBest}\textbf{$2.96\!\times\!10^{-1} \pm 1.11$} & $5.04\!\times\!10^{-1} \pm 8.18\!\times\!10^{-1}$ & \cellcolor{cBest}\textbf{$4.77\!\times\!10^{-1} \pm 1.38$} & $1.24 \pm 1.13$ \\
XinSheYang2 & $8.58\!\times\!10^{-4} \pm 3.47\!\times\!10^{-4}$ & \cellcolor{cBest}\textbf{$7.13\!\times\!10^{-4} \pm 2.01\!\times\!10^{-4}$} & \cellcolor{cBest}\textbf{$6.79\!\times\!10^{-12} \pm 1.54\!\times\!10^{-12}$} & $9.43\!\times\!10^{-12} \pm 1.43\!\times\!10^{-12}$ & \cellcolor{cBest}\textbf{$2.53\!\times\!10^{-20} \pm 5.45\!\times\!10^{-21}$} & $6.96\!\times\!10^{-20} \pm 2.05\!\times\!10^{-20}$ \\
Qing & \cellcolor{cBest}\textbf{$5.12\!\times\!10^{-13} \pm 0$} & $1.02\!\times\!10^{-1} \pm 4.49\!\times\!10^{-2}$ & \cellcolor{cBest}\textbf{$5.19\!\times\!10^{-11} \pm 3.97\!\times\!10^{-11}$} & $3.11 \pm 9.89\!\times\!10^{-1}$ & \cellcolor{cBest}\textbf{$1.21\!\times\!10^{-1} \pm 2.09\!\times\!10^{-1}$} & $2.35\!\times\!10^{1} \pm 1.20\!\times\!10^{1}$ \\
Pathological & $1.63 \pm 4.81\!\times\!10^{-1}$ & \cellcolor{cBest}\textbf{$1.63 \pm 4.99\!\times\!10^{-1}$} & \cellcolor{cBest}\textbf{$9.01 \pm 8.35\!\times\!10^{-1}$} & $9.02 \pm 7.82\!\times\!10^{-1}$ & $1.77\!\times\!10^{1} \pm 1.66$ & \cellcolor{cBest}\textbf{$1.74\!\times\!10^{1} \pm 9.16\!\times\!10^{-1}$} \\
SchafferF6 & $6.39\!\times\!10^{-1} \pm 4.33\!\times\!10^{-1}$ & \cellcolor{cBest}\textbf{$4.64\!\times\!10^{-1} \pm 3.99\!\times\!10^{-1}$} & $7.94 \pm 1.12$ & \cellcolor{cBest}\textbf{$7.05 \pm 1.06$} & $1.62\!\times\!10^{1} \pm 1.35$ & \cellcolor{cBest}\textbf{$1.57\!\times\!10^{1} \pm 1.72$} \\
Wavy & $1.29\!\times\!10^{-1} \pm 5.93\!\times\!10^{-2}$ & \cellcolor{cBest}\textbf{$8.54\!\times\!10^{-2} \pm 4.52\!\times\!10^{-2}$} & $3.09\!\times\!10^{-1} \pm 5.87\!\times\!10^{-2}$ & \cellcolor{cBest}\textbf{$2.66\!\times\!10^{-1} \pm 3.94\!\times\!10^{-2}$} & \cellcolor{cBest}\textbf{$3.48\!\times\!10^{-1} \pm 5.76\!\times\!10^{-2}$} & $3.48\!\times\!10^{-1} \pm 4.84\!\times\!10^{-2}$ \\
Weierstrass & \cellcolor{cBest}\textbf{$7.86\!\times\!10^{-2} \pm 3.29\!\times\!10^{-1}$} & $4.00 \pm 3.97\!\times\!10^{-1}$ & \cellcolor{cBest}\textbf{$5.16 \pm 1.60$} & $1.65\!\times\!10^{1} \pm 1.01$ & \cellcolor{cBest}\textbf{$1.53\!\times\!10^{1} \pm 3.53$} & $3.05\!\times\!10^{1} \pm 1.83$ \\
Pinter & $3.25\!\times\!10^{1} \pm 4.21\!\times\!10^{1}$ & \cellcolor{cBest}\textbf{$3.88 \pm 1.05\!\times\!10^{1}$} & $1.37\!\times\!10^{3} \pm 6.75\!\times\!10^{2}$ & \cellcolor{cBest}\textbf{$6.18\!\times\!10^{2} \pm 4.73\!\times\!10^{2}$} & $6.52\!\times\!10^{3} \pm 1.72\!\times\!10^{3}$ & \cellcolor{cBest}\textbf{$4.93\!\times\!10^{3} \pm 1.64\!\times\!10^{3}$} \\
StretchedV & \cellcolor{cBest}\textbf{$1.16 \pm 4.11\!\times\!10^{-1}$} & $1.25 \pm 3.60\!\times\!10^{-1}$ & \cellcolor{cBest}\textbf{$5.82 \pm 9.14\!\times\!10^{-1}$} & $6.31 \pm 8.72\!\times\!10^{-1}$ & \cellcolor{cBest}\textbf{$1.06\!\times\!10^{1} \pm 9.11\!\times\!10^{-1}$} & $1.28\!\times\!10^{1} \pm 1.22$ \\
HappyCat & $1.99\!\times\!10^{-1} \pm 1.12\!\times\!10^{-1}$ & \cellcolor{cBest}\textbf{$1.92\!\times\!10^{-1} \pm 4.18\!\times\!10^{-2}$} & $6.62\!\times\!10^{-1} \pm 1.84\!\times\!10^{-1}$ & \cellcolor{cBest}\textbf{$6.35\!\times\!10^{-1} \pm 1.58\!\times\!10^{-1}$} & $8.46\!\times\!10^{-1} \pm 1.25\!\times\!10^{-1}$ & \cellcolor{cBest}\textbf{$8.35\!\times\!10^{-1} \pm 1.13\!\times\!10^{-1}$} \\
HGBat & $2.86\!\times\!10^{-1} \pm 7.00\!\times\!10^{-2}$ & \cellcolor{cBest}\textbf{$2.24\!\times\!10^{-1} \pm 6.26\!\times\!10^{-2}$} & \cellcolor{cBest}\textbf{$5.38\!\times\!10^{-1} \pm 2.50\!\times\!10^{-1}$} & $5.92\!\times\!10^{-1} \pm 2.28\!\times\!10^{-1}$ & \cellcolor{cBest}\textbf{$5.44\!\times\!10^{-1} \pm 2.34\!\times\!10^{-1}$} & $6.12\!\times\!10^{-1} \pm 2.44\!\times\!10^{-1}$ \\
Whitley & \cellcolor{cBest}\textbf{$2.46\!\times\!10^{1} \pm 1.17\!\times\!10^{1}$} & $4.91\!\times\!10^{1} \pm 8.45$ & \cellcolor{cBest}\textbf{$4.41\!\times\!10^{2} \pm 8.39\!\times\!10^{1}$} & $6.27\!\times\!10^{2} \pm 6.23\!\times\!10^{1}$ & \cellcolor{cBest}\textbf{$1.46\!\times\!10^{3} \pm 1.93\!\times\!10^{2}$} & $2.07\!\times\!10^{3} \pm 1.14\!\times\!10^{2}$ \\
Exponential & \cellcolor{cBest}\textbf{$0 \pm 0$} & $4.62\!\times\!10^{-3} \pm 1.49\!\times\!10^{-3}$ & \cellcolor{cBest}\textbf{$1.05\!\times\!10^{-7} \pm 6.29\!\times\!10^{-8}$} & $4.26\!\times\!10^{-2} \pm 6.68\!\times\!10^{-3}$ & \cellcolor{cBest}\textbf{$1.48\!\times\!10^{-6} \pm 2.29\!\times\!10^{-6}$} & $9.46\!\times\!10^{-2} \pm 1.10\!\times\!10^{-2}$ \\
CosineMixture & \cellcolor{cBest}\textbf{$2.96\!\times\!10^{-2} \pm 5.91\!\times\!10^{-2}$} & $1.10\!\times\!10^{-1} \pm 3.15\!\times\!10^{-2}$ & \cellcolor{cBest}\textbf{$6.21\!\times\!10^{-1} \pm 2.57\!\times\!10^{-1}$} & $7.88\!\times\!10^{-1} \pm 1.05\!\times\!10^{-1}$ & \cellcolor{cBest}\textbf{$1.59 \pm 4.86\!\times\!10^{-1}$} & $1.65 \pm 1.97\!\times\!10^{-1}$ \\
\bottomrule
\end{tabular}}
\end{table}

%% file: outputs/tables/timing.tex
\begin{table}[t]
\centering
\caption{Wall-clock time in seconds (mean over 30 runs).}
\label{tab:timing}
\footnotesize
\setlength{\tabcolsep}{3.5pt}
\begin{tabular}{l cc cc cc}
\toprule
\rowcolor{cGroupHd}
Function & \multicolumn{2}{c}{$D=10$} & \multicolumn{2}{c}{$D=30$} & \multicolumn{2}{c}{$D=50$} \\
\cmidrule(lr){2-3}\cmidrule(lr){4-5}\cmidrule(lr){6-7}
\rowcolor{cGroupHd}
 & PSO & DPSO & PSO & DPSO & PSO & DPSO \\
\midrule
Ackley & $0.38$ & $0.44$ & $0.21$ & $0.26$ & $0.25$ & $0.30$ \\
Alpine1 & $0.40$ & $0.43$ & $0.22$ & $0.28$ & $0.26$ & $0.30$ \\
Bohachevsky & $0.36$ & $0.40$ & $0.19$ & $0.23$ & $0.28$ & $0.30$ \\
ChungReynolds & $0.33$ & $0.36$ & $0.17$ & $0.21$ & $0.20$ & $0.24$ \\
Cigar & $0.33$ & $0.37$ & $0.17$ & $0.21$ & $0.21$ & $0.25$ \\
CosineMixture & $0.38$ & $0.43$ & $0.21$ & $0.26$ & $0.25$ & $0.32$ \\
DixonPrice & $0.36$ & $0.40$ & $0.20$ & $0.25$ & $0.23$ & $0.27$ \\
Exponential & $0.38$ & $0.42$ & $0.21$ & $0.25$ & $0.24$ & $0.28$ \\
Griewank & $0.37$ & $0.41$ & $0.21$ & $0.25$ & $0.25$ & $0.29$ \\
HGBat & $0.39$ & $0.43$ & $0.21$ & $0.26$ & $0.27$ & $0.32$ \\
HappyCat & $0.38$ & $0.42$ & $0.21$ & $0.26$ & $0.24$ & $0.28$ \\
Levy & $0.38$ & $0.41$ & $0.22$ & $0.25$ & $0.26$ & $0.30$ \\
Pathological & $0.38$ & $0.41$ & $0.22$ & $0.26$ & $0.25$ & $0.29$ \\
Pinter & $0.41$ & $0.44$ & $0.26$ & $0.29$ & $0.30$ & $0.38$ \\
Qing & $0.40$ & $0.44$ & $0.21$ & $0.25$ & $0.25$ & $0.31$ \\
Quartic & $0.36$ & $0.40$ & $0.19$ & $0.23$ & $0.21$ & $0.26$ \\
Rastrigin & $0.39$ & $0.42$ & $0.21$ & $0.25$ & $0.24$ & $0.28$ \\
Rosenbrock & $0.35$ & $0.38$ & $0.19$ & $0.22$ & $0.22$ & $0.25$ \\
RotHyperEllipsoid & $0.34$ & $0.39$ & $0.17$ & $0.21$ & $0.20$ & $0.23$ \\
Salomon & $0.37$ & $0.40$ & $0.19$ & $0.24$ & $0.23$ & $0.27$ \\
SchafferF6 & $0.39$ & $0.43$ & $0.23$ & $0.28$ & $0.28$ & $0.29$ \\
Schwefel1.2 & $0.36$ & $0.39$ & $0.19$ & $0.23$ & $0.23$ & $0.27$ \\
Schwefel2.20 & $0.35$ & $0.39$ & $0.18$ & $0.22$ & $0.21$ & $0.25$ \\
Schwefel2.21 & $0.36$ & $0.39$ & $0.18$ & $0.22$ & $0.21$ & $0.25$ \\
Schwefel2.22 & $0.35$ & $0.39$ & $0.18$ & $0.23$ & $0.22$ & $0.26$ \\
Schwefel2.23 & $0.35$ & $0.40$ & $0.19$ & $0.23$ & $0.21$ & $0.25$ \\
Schwefel & $0.36$ & $0.39$ & $0.22$ & $0.23$ & $0.23$ & $0.27$ \\
Sphere & $0.38$ & $0.41$ & $0.18$ & $0.21$ & $0.21$ & $0.25$ \\
StretchedV & $0.38$ & $0.43$ & $0.23$ & $0.26$ & $0.25$ & $0.31$ \\
SumDiffPowers & $0.33$ & $0.36$ & $0.18$ & $0.22$ & $0.22$ & $0.27$ \\
SumSquares & $0.35$ & $0.38$ & $0.18$ & $0.22$ & $0.21$ & $0.25$ \\
Wavy & $0.39$ & $0.43$ & $0.22$ & $0.27$ & $0.28$ & $0.32$ \\
Weierstrass & $0.44$ & $0.49$ & $0.27$ & $0.32$ & $0.47$ & $0.57$ \\
Whitley & $0.44$ & $0.54$ & $0.28$ & $0.34$ & $0.43$ & $0.54$ \\
XinSheYang2 & $0.37$ & $0.42$ & $0.21$ & $0.26$ & $0.29$ & $0.30$ \\
Zakharov & $0.37$ & $0.40$ & $0.20$ & $0.24$ & $0.23$ & $0.27$ \\
\bottomrule
\end{tabular}
\end{table}

%% file: references.bib
@inproceedings{kennedy1995particle,
  title={Particle swarm optimization},
  author={Kennedy, James and Eberhart, Russell},
  booktitle={Proceedings of ICNN'95-international conference on neural networks},
  volume={4},
  pages={1942--1948},
  year={1995},
  organization={ieee}
}

@article{wang2018particle,
  title={Particle swarm optimization algorithm: an overview},
  author={Wang, Dongshu and Tan, Dapei and Liu, Lei},
  journal={Soft computing},
  volume={22},
  number={2},
  pages={387--408},
  year={2018},
  publisher={Springer}
}

@inproceedings{shi1998modified,
  title={A modified particle swarm optimizer},
  author={Shi, Yuhui and Eberhart, Russell},
  booktitle={1998 IEEE international conference on evolutionary computation proceedings. IEEE world congress on computational intelligence (Cat. No. 98TH8360)},
  pages={69--73},
  year={1998},
  organization={IEEE}
}

@book{van2001analysis,
  title={An analysis of particle swarm optimizers},
  author={Van Den Bergh, Frans},
  year={2001},
  publisher={University of Pretoria (South Africa)}
}

@article{fogel1994evolutionary,
  title={An introduction to simulated evolutionary optimization},
  author={Fogel, David B},
  journal={IEEE transactions on neural networks},
  volume={5},
  number={1},
  pages={3--14},
  year={1994},
  publisher={IEEE}
}

@article{tang2013surrogate,
  title={A surrogate-based particle swarm optimization algorithm for solving optimization problems with expensive black box functions},
  author={Tang, Yuanfu and Chen, Jianqiao and Wei, Junhong},
  journal={Engineering optimization},
  volume={45},
  number={5},
  pages={557--576},
  year={2013},
  publisher={Taylor \& Francis}
}

@inproceedings{helwig2007dimension,
  title={Particle swarm optimization in high-dimensional bounded search spaces},
  author={Helwig, Sabine and Wanka, Rolf},
  booktitle={2007 IEEE swarm intelligence symposium},
  pages={198--205},
  year={2007},
  organization={IEEE}
}

@inproceedings{hu2002solving,
  title={Solving constrained nonlinear optimization problems with particle swarm optimization},
  author={Hu, Xiaohui and Eberhart, Russell and others},
  booktitle={Proceedings of the sixth world multiconference on systemics, cybernetics and informatics},
  volume={5},
  pages={203--206},
  year={2002},
  organization={Citeseer}
}

@article{gang2012novel,
  title={A novel particle swarm optimization algorithm based on particle migration},
  author={Gang, Ma and Wei, Zhou and Xiaolin, Chang},
  journal={Applied Mathematics and Computation},
  volume={218},
  number={11},
  pages={6620--6626},
  year={2012},
  publisher={Elsevier}
}

@article{yildiz2009novel,
  title={A novel particle swarm optimization approach for product design and manufacturing},
  author={Y{\i}ld{\i}z, Ali R{\i}za},
  journal={The International Journal of Advanced Manufacturing Technology},
  volume={40},
  pages={617--628},
  year={2009},
  publisher={Springer}
}

@article{hakli2014novel,
  title={A novel particle swarm optimization algorithm with Levy flight},
  author={Hakl{\i}, H{\"u}seyin and U{\u{g}}uz, Harun},
  journal={Applied Soft Computing},
  volume={23},
  pages={333--345},
  year={2014},
  publisher={Elsevier}
}

@article{eschwege2024belief,
  title={Belief space-guided approach to self-adaptive particle swarm optimization},
  author={Eschwege, Daniel von and Engelbrecht, Andries},
  journal={Swarm Intelligence},
  volume={18},
  number={1},
  pages={31--78},
  year={2024},
  publisher={Springer}
}

@article{clerc2002particle,
  title={The particle swarm--explosion, stability, and convergence in a multidimensional complex space},
  author={Clerc, Maurice and Kennedy, James},
  journal={IEEE Transactions on Evolutionary Computation},
  volume={6},
  number={1},
  pages={58--73},
  year={2002},
  publisher={IEEE}
}

@techreport{duchi2007derivations,
  title={Derivations for linear algebra and optimization},
  author={Duchi, John},
  year={2007},
  institution={University of California, Berkeley}
}

@article{jamil2013literature,
  title={A literature survey of benchmark functions for global optimisation problems},
  author={Jamil, Momin and Yang, Xin-She},
  journal={International Journal of Mathematical Modelling and Numerical Optimisation},
  volume={4},
  number={2},
  pages={150--194},
  year={2013},
  publisher={Inderscience Publishers}
}
